\documentclass[12pt]{article}
\pdfoutput=1



\usepackage[margin=1.25in]{geometry}
\usepackage{mathrsfs, mathtools}
\usepackage{latexsym, amssymb, amsmath, amsthm}
\usepackage[round]{natbib}

\usepackage{bm}
\usepackage[title,toc,titletoc]{appendix}
\usepackage{titlesec}
\titleformat{\section}[block]{\large\normalfont\filcenter}{\thesection.}{.5em}{}
\titleformat{\subsection}[block]{\normalfont\bfseries\filright}{\thesubsection.}{.5em}{}

\usepackage{graphicx}
\usepackage{caption,subcaption}
\usepackage{comment}

\usepackage[T1]{fontenc}
\usepackage[utf8]{inputenc}
\usepackage{mathtools, slashed}


\usepackage[T1]{fontenc}
\usepackage[utf8]{inputenc}

\usepackage{setspace}
\usepackage{enumitem}
\usepackage[multiple, flushmargin]{footmisc}
\usepackage[colorlinks=true,linkcolor = blue, citecolor=blue, hyperfootnotes=false, hyperindex,breaklinks]{hyperref}
\usepackage[dvipsnames]{xcolor}
\usepackage{soul}


\usepackage{tikz}
\usetikzlibrary{positioning}

\usepackage{pgfplots}
\pgfplotsset{compat=1.14}
\usetikzlibrary{patterns, decorations.pathreplacing, arrows}

\tikzset{
  every node/.style    = {
    text centered,
    line width = .5,
    anchor = center,
  },
  every label/.style   = {
    fill = white, anchor = mid,
  },
  every path/.style   = {
    > = stealth
  },
  point/.style args   = {(#1)#2}{
    rounded corners,
    fill = white,
    minimum height = 20,
    minimum width = 10,
    label = { [name = #1] above:#2 },
  },
  point a/.style args   = {(#1)#2}{
    rounded corners,
    fill = white,
    minimum height = 10,
    minimum width = 20,
    label = { [name = #1] right:#2 },
  },
  point b/.style args   = {(#1)#2}{
    rounded corners,
    fill = white,
    minimum height = 10,
    minimum width = 75,
    label = { [name = #1] right:#2 },
  },
}

\usepackage{etoolbox}

\sloppy

\newtheorem{theorem}{Theorem}

\newtheorem{lemma}{Lemma}
\newtheorem{proposition}{Proposition}
\newtheorem{claim}{Claim}

\theoremstyle{definition}
\newtheorem{definition}{Definition}

\newtheorem{axiom}{Axiom}

\newtheorem{example}{Example}

\def\S{\mathcal{S}}   
    
 \def\cals{\mathcal{S}}



 \def\Re{\mathbb R}

\newcommand{\norm}[1]{\left\langle#1\right\rangle}
\newcommand{\abs}[1]{\left| #1 \right|}

\onehalfspacing
\frenchspacing


\DeclareMathOperator*{\argmin}{arg\,min}

\usepackage{cleveref}

\usepackage{lmodern}

\begin{document}
\title{Diversity in Choice as Majorization\thanks{For helpful comments and discussions, we are grateful to Lars Ehlers, Matthew Elliot, Fuhito Kojima, Bobby Pakzad-Hurson, and seminar audiences at University of Bonn, Brown University, Durham Market-Design 
Workshop, EC'25, University of Montreal,  Recent Advances in School Choice 2025, RUD, SEA, SITE-Market Design, and University of Tokyo.}}
\author{Federico Echenique\thanks{Department of Economics, UC Berkeley. Contact: \href{mailto:fede@econ.berkeley.edu}{fede@econ.berkeley.edu}} \hspace*{1.25em} Teddy Mekonnen\thanks{Department of Economics, Brown University. Contact: \href{mailto:mekonnen@brown.edu}{mekonnen@brown.edu}}\hspace*{1.25em} M. Bumin Yenmez\thanks{Department of Economics, WUSTL and Business School, Durham University (UK). Contact: \href{mailto:bumin@wustl.edu }{bumin@wustl.edu}.}}

\maketitle
\begin{abstract}
We propose a framework that uses majorization to model diversity and representativeness in school admissions. We generalize the standard notion of majorization to accommodate arbitrary distributional targets, such as a student body that reflects the population served by the school. Building on this framework, we introduce and axiomatically characterize the $r$-targeting Schur choice rule, which balances diversity and priority in admissions. We show that this rule is optimal: any alternative rule must either leave seats unfilled, reduce diversity, or admit lower-priority students. The rule satisfies path independence (and substitutability), which guarantees desirable outcomes in matching markets.  Our work contributes to the ongoing discourse on market design by providing a new and flexible framework for improving diversity and representation.\\

\noindent \textit{JEL Classifications: D47, I24, I28}
\end{abstract}

\newpage

\section{Introduction}
Diversity and representation are important admissions criteria for many  organizations. School districts and universities aim for their student bodies to be diverse and representative of broader populations along dimensions such as race, ethnicity, socioeconomic status, gender, and geography. At the same time, they aim to allocate  seats to students with the highest priorities, which may be determined by standardized test performance, proximity of residence, sibling attendance, or other institutional rules. This raises a fundamental question: how should an organization balance its diversity objectives with its priorities?

We address this question in the context of school choice by introducing a framework that applies  the theory of majorization (\cite{hardy1952inequalities}; \cite{marshallolkin}) to model diversity; to say when one student body is more diverse than another. Based on majorization as a notion of comparative diversity, we then propose a set of normative principles, formulated as axioms on choice rules, that a school could decide to either adopt or reject. If adopted, the axioms uniquely determine a flexible choice procedure that reconciles, and trades off, diversity and priority considerations. 

The connection between diversity and majorization is motivated by the latter's characterization through \emph{Robin Hood transfers}. Suppose, for example, the socioeconomic composition of the broader population is uniform across three groups: low-income $(\ell)$, middle-income $(m)$, and high-income $(h)$. Consider a student body  $S$ with more type-$m$ students than type-$\ell$ students. Compared to the broader population, type-$m$ students are over-represented relative to type-$\ell$ students. A Robin Hood transfer reduces the number of type-$m$ students and increases the number of type-$\ell$ students by one, while leaving the number of type-$h$ students unchanged and without reversing the inequality in favor of type-$\ell$. This produces a new student body $S'$ with a more equal composition across $\ell$ and $m$, making $S'$ more representative of the broader population than $S$. In this case, we simply say that $S'$ is more diverse than $S$. At the same time, such a transfer implies that the socioeconomic distribution 
of $S'$ is \emph{majorized} by that of $S$. In this way, majorization provides a natural framework for comparative diversity when the benchmark is an equal representation across types.

The standard notion of majorization has, however, one problem. It assumes that diversity is measured relative to an equal-representation benchmark. This is too restrictive for our purposes. In practice, schools often aim to align the composition of their student body with the demographics of the surrounding community \citep{alves}, which is usually not uniform. Therefore, we generalize standard majorization to define \emph{$r$-targeting diversity}, which flexibly accommodates any type distribution ``$r$'' as the benchmark. As with standard majorization, $r$-targeting diversity is defined through Robin Hood transfers, but tailors when one type is considered to be over-represented relative to another based on the benchmark $r$.

Our framework stands in sharp contrast with current practices in school choice. These guard against the under-representation of students whose types belong to  ``protected classes.'' In practice, under-representation is avoided by reserving seats for students from protected classes. To see the difference, suppose type-$\ell$ represents a protected class. If a school reserves seats for low-income students, it essentially introduces a bias that favors type-$\ell$ students without regard for the representation of other types. The school is effectively stating, \emph{as a matter of principle}, that filling all its seats with only type-$\ell$ students, when there are type-$m$ or type-$h$ applicants, is an acceptable outcome; whereas the symmetric outcome where all seats are filled by only type-$m$ or only type-$h$ students when there are type-$\ell$ applicants is deemed unacceptable. 

Although over-representation of type-$\ell$ students may be unlikely, the practice of reserving seats philosophically accepts such an outcome, and communicates this bias to the broader public. By contrast, under $r$-targeting diversity,  a student body composed entirely of type-$\ell$ students is no more diverse than one composed entirely of type-$m$ or type-$h$ students. In other words, our notion of diversity based on majorization treats over- and under-representation symmetrically.

Building on our comparative diversity framework, we propose two axioms on choice rules that capture how a school may balance its diversity objectives with its priorities. The first axiom, \emph{promotes $r$-targeting diversity}, formalizes the school's objective of admitting a representative student body. It requires that no admitted student can be replaced by a rejected student in a way that would strictly improve the diversity of the admitted student body. 

The second axiom turns to the role of priorities. Consider again the possibility of swapping an admitted student for a rejected one. If such a swap is diversity-neutral (neither improving nor worsening diversity), then the school's decision to admit the first student over the second must rest solely on priority considerations. Our second axiom, \emph{$r$-targeting Schur revealed preference}, requires that the priority ranking inferred from this “revealed-preference exercise” is acyclic. In other words, whenever diversity considerations are silent, the school's choice rule must be consistent with some underlying priority ranking. Crucially, we do not assume an exogenously given priority ranking. Rather, we endogenously recover it from the school's choice rule. This is important because it means that our axioms are not parameterized by an arbitrary priority ranking. Instead, the axioms determine a class of choice functions, and each choice function in our class is consistent with some student priority ranking.

\textbf{We have three main results:} First, we show that an efficiency axiom (\emph{non-wastefulness}), together with the two axioms we have mentioned, uniquely pin down a greedy choice rule that we term the \textit{$r$-targeting Schur choice rule}. More precisely, any choice rule satisfying all three axioms is equivalent to the  $r$-targeting Schur choice rule for some student priority ranking.

Second, when a priority ranking is exogenously given, the associated $r$-targeting Schur choice rule is optimal in the sense that any choice 
rule that does not coincide with it must either be wasteful, yield a less diverse  student body, or admit students of lower priority. 

Finally, we show that the $r$-targeting Schur choice rule satisfies path independence (and substitutability), which guarantees the existence of desirable outcomes in matching markets. This result highlights the advantage of our approach over proportionality rules---for example, admitting an equal number of students from each socioeconomic group---which, while intuitive, fail path independence and thus cannot guarantee stable outcomes.\footnote{See the discussion of substitutability in \cite{echyen12}, pp. 
2682-2683.}

We also compare the $r$-targeting Schur choice rule to choice rules with quotas and reserves, two common approaches for balancing priorities with diversity. We show that these rules do not coincide with our choice rule. The contrast highlights the flexibility of our majorization-based approach. Quotas and reserves impose rigid upper and lower bounds on student types, where these allocations are unresponsive to the actual 
type composition of the applicant pool. By contrast, the $r$-targeting Schur choice rule behaves as if the allocation of type-specific seats were endogenously adjusted to the applicant distribution, doing so in a way that targets the representativeness benchmark $r$.

Finally, while we frame our analysis as a controlled school choice problem, it is not the only salient application of our model and results. Our framework applies to various problems in which distributional objectives play an important role: when a firm hires employees, when a grant-giving foundation funds proposals, when health providers ration vaccines, or when immigration officials resettle refugees.

\subsection{Related Literature}
The literature has mostly addressed questions of diversity by taking a mechanism design approach to school choice. This approach was initiated by \citet{abdulson03}, who brought the ideas and methods of modern mechanism design to the study of school choice and also started the formal study of controlled school choice. Since then, the literature has developed around certain models of incorporating diversity considerations into school choice. The focus is, for the most part, on models of school choice via quotas and reserves. See, for example,  \citet{abd2005}, \citet{koj12}, \citet{hayeyi13}, \citet{westkamp10}, \citet{ehayeyi14}, \citet{dogan16}, \citet{fratro:2017}, and \citet{aytur2020}. 

Within the controlled school choice literature, the paper closest to ours is \citet{echyen12}, who introduced the axiomatic approach to controlled school choice and provided characterizations of choice rules based on quotas and reserves using axioms on efficiency and diversity. Relatedly, \citet{imamura2020} and \citet{abdulgrig2021} also provided axiomatic characterizations of choice rules involving reserves and quotas. Choice rules based on reserves have already been implemented to assign students to schools in some countries such as Brazil \citep{aybo16} and Chile \citep{correa19}.

We differ from the existing literature in one key aspect. While the literature on reserves and quotas focuses on avoiding under-representation of protected types, we instead introduce and adapt the theory of majorization as a framework for diversity that treats both under- and over-representation symmetrically. Moreover, as was discussed in the Introduction, our diversity criterion yields the $r$-targeting Schur choice rule, which is significantly more flexible than what is implied by quotas and reserves because it allows the school's choice to respond to the applicant type distribution. \Cref{sec:reservesandquotas} provides a formal comparison to reserves and quotas.

In more recent work, \citet{hakoyeyo2022} study choice rules based on ordinally concave diversity functions. Ordinal concavity does not represent the majorization order. This distinction is best understood by comparing it with Schur concave functions, which are defined by their monotonicity with respect to majorization. It is easy to find examples that are ordinally concave but not Schur concave, and vice versa.\footnote{For example, the \textit{Shannon diversity index} \citep{shannon1948} is Schur concave but not ordinally concave, whereas the \textit{saturated diversity index} \citep{hakoyeyo2022} is ordinally concave but not Schur concave.} While \citet{hakoyeyo2022} focus on choice rules that promote diversity guided by ordinally concave functions, we do not use diversity functions but instead employ majorization to provide an incomplete and relatively coarse diversity ranking of distributions.  Therefore, our approach is conceptually different from theirs. However, our \autoref{thm:merit} is similar to their result that compares choice rules based on priority. Other priority comparison results have been studied by \cite{sonmez/yenmez:22} and \cite{aygun/turhan:2023}. Some of these results, like ours, rely on a matroid structure, but the reasons for why such a structure arise are very different.

Building on the connection between our framework and matroids, our \autoref{prop:greedy} shows that the $r$-targeting Schur choice rule is equivalent to a greedy rule. \cite{Bonet/Arnosti/Sethuraman:2024} shows that the class of such greedy rules is \emph{explainable}: it is always possible to justify to a rejected student that their rejection was due to the admission of a student that dominates in either the priority or diversity criteria. Their insight thus shows that our choice rule is not only optimal, but also has the additional desirable property of being explainable.

\citet{celebi:2024} studies the rationalizability of choice rules in environments where revealed preference from the choice rules must be compatible with some exogenously given preference over sets of students, such as a preference for diversity. He examines which choice rules used in practice, such as reserves and quotas, are rationalizable in this sense.\footnote{A special case of this appears in Online Appendix D of \citet{echyen12}.} Our choice rule is rationalizable because it satisfies path independence \citep{yang2020}.

As we noted earlier, our framework is relevant beyond the school choice literature. Thus, the choice rule that we introduce may prove useful in matching markets with regional distributional constraints such as the Japanese residency matching \citep{kamakoji-basic}, matching public servants with institutions \citep{thakur18}, refugee resettlement \citep{Andersson2019,refugee2023}, pandemic rationing \citep{rationing2022}, and markets with adjustable quotas \citep{kumano/kurino:2022}. 

Finally, our $r$-targeting choice rule is equivalent to a greedy algorithm that optimizes priority subject to a majorization constraint. While such an optimization problem may resemble the recent applications of majorization to mechanism and information design, as seen in \cite{kleiner/moldovanu/strack:2021}, \cite{nikzad2022}, and \cite{arieli2023}, these papers consider a constrained optimization problem where the objective and the constraint are defined over the same domain (specifically, the space of monotone functions). In contrast, our constrained optimization problem features an objective (priority) that is independent of our constraint (diversity). Therefore, while we rely on similar mathematical tools, our approach is conceptually distinct from theirs.

\section{Model}\label{sec:model}

\subsection{Preliminaries} \label{sec:preliminaries}
We denote the set of nonnegative integers by $\mathbb{Z}_+$ and the set of positive integers by $\mathbb{Z}_{++}$. Similarly, we denote the set of nonnegative and positive reals by $\mathbb R_+$ and $\mathbb R_{++}$, respectively. For any $n\in\mathbb Z_{++}$ and 
$x \in \mathbb R^n$, let  $\norm{x}\coloneqq \sum_{i=1}^n x_i$ denote the sum of the coordinates of $x$, and let $x_{[i]}$ denote the $i^{th}$ largest coordinate of $x$ so that $x_{[1]}\geq \ldots \geq x_{[n]}$. For any $i\in \{1,\ldots,n\}$, let $\chi_i\in \mathbb{Z}_+^n$  denote the unit vector where the $i$-th coordinate is one and all other coordinates are zero. The $(n-1)$-dimensional simplex in $\Re^n$ is written as $\Delta^{n-1}\coloneqq \{x\in \mathbb R^n_+: \norm{x}=1\}$.

For any two vectors $x,y\in \mathbb R^n$, we write $x\geq y$ if $x_i\geq y_i$ for every $i \in \{1,\ldots,n\}$, and we write $x>y$ if $x\geq y$ and  $x_i>y_i$ for some $i\in \{1,\ldots,n\}$. Additionally, we say that  $x$ \emph{majorizes} $y$, denoted as $x\succ y$, if 
\begin{enumerate}
\item $\sum_{i=1}^k x_{[i]} \geq \sum_{i=1}^k y_{[i]}$ for every $k\in \{1,\ldots,n-1\}$, and 
\item $\sum_{i=1}^n x_{[i]} = \sum_{i=1}^n y_{[i]}$.
\end{enumerate}
We say that $x$ \emph{strictly majorizes} $y$, denoted as $x\succ_s y$, if $x \succ y$ but $y\mathrel{\slashed \succ} x$. Finally, for any finite set $X$, let $2^X$ denote the power set, and $\abs{X}$ the cardinality, of $X$.

\subsection{Setup}\label{sec:setup}
We consider an environment with one school and a finite, nonempty set of students $\S$. The school has a capacity $q\in \mathbb Z_{++}$ and a priority ranking $P$ on $\S$. The ranking $P$ is a complete, transitive, and antisymmetric binary relation over students. For example, $P$ could be derived from a composite score for each student that reflects standardized test performance, proximity of the student’s home to the school, whether the student has an older sibling at the school, and similar factors. In our main result, the priority ranking is not imposed exogenously; instead, we characterize the choice rules that are consistent with some priority ranking. 

The school values both the priority ranking and the diversity of its student body. A central element of our model is the classification of students into types, which capture attributes such as disability status, ethnicity, gender, race, socioeconomic status, caste, or veteran status. We focus on mutually exclusive types, with each student belonging to exactly one of $n \in \mathbb{Z}_{++}$ possible types. Although this assumption is restrictive, it encompasses two important cases. First, a school may focus narrowly on a single attribute, such as socioeconomic status. In this case, each student can be uniquely assigned to one of three types: low, middle, or high income. Second, a school may adopt a broader approach that considers the intersection of multiple attributes. For instance, if a school values both socioeconomic status and gender, each student can be uniquely assigned to one of six intersectional types: low-income male, middle-income male, high-income male, low-income female, middle-income female, or high-income female.

We index the $n$ possible types by $i \in \mathcal{N} \coloneqq {1,\ldots,n}$. For any nonempty set of students $S \subseteq \S$, we map $S$ to a vector $\xi(S) \in \mathbb{Z}_{+}^n$, where $\xi_i(S)$ denotes the number of type-$i$ students in $S$. We refer to $\xi(S)$ as the \emph{type distribution} of students in $S$. Throughout the paper, we identify the space $\mathbb{Z}_{+}^n$ of nonnegative integer $n$-dimensional vectors with the set of all \emph{feasible} type distributions that may arise in the school’s choice problem.

\section{Diversity as Majorization}\label{sec:majorization}
We develop a notion of comparative diversity by evaluating two student bodies according to their respective type distributions. Formally, we first define a binary relation on the space of all type distributions $\mathbb{R}_+^n$ and then restrict it to the subset of feasible type distributions $\mathbb{Z}_+^n$.

As a benchmark, consider a maximally diverse student body, defined as one in which each type is equally represented. Formally, let $u \in \Delta^{n-1}$ denote the uniform measure, with $u_i = 1/n$ for all $i \in \mathcal{N}$. Given $\lambda > 0$ students, the maximally diverse distribution is $\lambda u \in \mathbb{R}_+^n$. If $\lambda u \notin \mathbb{Z}_+^n$ (i.e., if $\lambda$ is not divisible by $n$), this type distribution is infeasible. Nevertheless, we treat it as the target distribution that a school with $\lambda$ students seeks to approximate.

Now consider any type distribution $x \in \mathbb{R}^n_+$ such that $x_i - \norm{x}{u}_i > x_j - \norm{x}{u}_j$ for some types $i,j \in \mathcal{N}$. Since $u_i=u_j$, the inequality $x_i - \norm{x}{u}_i > x_j - \norm{x}{u}_j$ just means that $x_i>x_j$. In words, type $i$ is overrepresented relative to type $j$ when the uniform measure $u$ defines the target distribution. Writing the inequality as $x_i-\norm{x}{u}_i>x_j-\norm{x}{u}_j$ will soon be useful.

It is easy to see how to modify $x$ so that it is closer to the target distribution. Consider the distribution $y \in \mathbb{R}_+^n$ obtained by shifting a mass $\delta \geq 0$ from type $i$ to type $j$, while leaving all other types unchanged. That is, $y = x + \delta \chi_j - \delta \chi_i$. Since $\norm{x} = \norm{y}$, the relevant target distribution for both $x$ and $y$ is $\norm{x}{u}$. If $\delta \in (0, x_i - x_j)$, the constructed $y$ is closer than $x$ to the target distribution,\footnote{Formally, $||y-\norm{y}{u}||_2 < ||x-\norm{x}{u}||_2$. Moreover, $y$ is (weakly) closer than $x$ to the target distribution in the $\ell^1$-norm, which in our setting is equivalent to the total-variation norm.} suggesting that $y$ is more diverse than $x$ under the uniform-measure benchmark. By contrast, if $\delta = x_i - x_j$, then $y$ is merely a permutation of $x$ in which the mass of types $i$ and $j$ are swapped, suggesting that $x$ and $y$ are equally diverse.

In short, we obtain $y$ from $x$ by transferring mass from a relatively overrepresented type to an underrepresented type. Such transformations of a distribution are called Robin Hood transfers \citep{arnold2012majorization}. The idea was first suggested by \cite{pigou1912wealth} and formalized by \cite{dalton1920measurement} in the context of income inequality. Moreover, \cite{hardy1952inequalities} show that, for any two distributions $x, y \in \mathbb{R}_+^n$, it is possible to obtain $y$ from $x$ through a sequence of Robin Hood transfers if and only if $x$ majorizes $y$. Thus, under a uniform-measure benchmark, majorization provides a natural binary relation for comparing diversity.

For our purposes, however, a notion of diversity based on the uniform measure is too restrictive. For example, a common consideration for a school is to match the type distribution of its student body with that of a broader population \citep{alves}, which is unlikely to have an equal fraction of each type. We therefore introduce a generalized notion of diversity that allows for an arbitrary benchmark measure $r \in \Delta^{n-1}$, where $r_i$ may represent, for example, the fraction of type-$i$ individuals in the general population residing within the school's geographical area. Given $\lambda > 0$ students, the maximally diverse distribution is then $\lambda r \in \mathbb{R}+^n$. Once again, such a type distribution is infeasible if $\lambda r \notin \mathbb{Z}+^n$, but we may treat it as the target distribution for a school with $\lambda$ students.

\smallskip
\begin{definition}\label{def:rht}
Given a measure ${r}\in\Delta^{n-1}$ and two type distributions $x, y\in\mathbb{R}_+^n$, we say that $y$ is obtained from $x$ via an \emph{$r$-targeting Robin Hood transfer} if there exist types $i,j\in\mathcal{N}$ such that $x_i-\norm{x}{r}_i\geq x_j-\norm{x}{r}_j$, and a constant $0\leq \delta\leq x_i-x_j-\norm{x}({r}_i-{r}_j)$ such that $y=x+\delta \chi_j-\delta\chi_i$.
\end{definition}
\smallskip

\autoref{def:rht} is a straightforward extension of the Robin Hood transfer from the uniform measure to an arbitrary measure $r$: the distribution $y$ is constructed by moving mass from some relatively overrepresented type $i$ to an underrepresented type $j$ so that  $y$ is (weakly) closer than $x$ to the target distribution $\norm{x}r$. We recover the standard definition of a Robin Hood transfer in the special case when $r=u$. 

\smallskip
\begin{definition}\label{def:diversity}
Given a measure $r \in \Delta^{n-1}$ and two type distributions $x, y \in \mathbb{R}_+^n$, we say that $y$ is \emph{more $r$-diverse} than $x$, denoted by $y \mathrel{\unrhd_r} x$, if there exists a finite sequence of distributions $\{z_k\}_{k=0}^K$ with $z_0 = x$, $z_K = y$, such that for each $k \in \{1, \ldots, K\}$, $z_k$ is obtained from $z_{k-1}$ via an $r$-targeting Robin Hood transfer.

Furthermore, we say that $y$ is \emph{strictly more $r$-diverse} than $x$, denoted by $y \mathrel{\rhd_r} x$, if $y \mathrel{\unrhd_r} x$ and $x \mathrel{\slashed\unrhd_r} y$.
\end{definition}
\smallskip

We refer to the notion of comparative diversity in \autoref{def:rht} as \emph{$r$-targeting diversity}. Intuitively, $y \mathrel{\unrhd_r} x$ if $y$ can be reached from $x$ through a series of bilateral transfers across types that move $x$ closer to the target distribution $\norm{x} r$. Note that this notion of comparative diversity applies only when $x$ and $y$ are of equal size, i.e., $\norm{x} = \norm{y}$.

\smallskip
\begin{proposition}\label{prop:majorization}
Fix a measure $r\in\Delta^{n-1}$, and let $T^r:\mathbb{R}^n\to\mathbb{R}^n$ be given by 
\[
T^r(x) \coloneqq x+\norm{x}(u-r).
\]
For any two type distributions $x, y\in\mathbb{R}_+^n$, $y$ is {more $r$-diverse} than  $x$ if and only if $T^{r}(x)$ majorizes $T^r(y)$. 
\end{proposition}
\smallskip

\autoref{prop:majorization} formalizes the equivalence between $r$-targeting diversity and majorization. This equivalence implies that the binary relation $\unrhd_r$ is reflexive and transitive, i.e., a preorder, although it is neither antisymmetric nor complete. The transformation $T^r$ linearly translates distributions so that, for $\lambda > 0$ students, the target distribution under measure $r$ is mapped to the target distribution under the uniform measure, i.e., $T^r(\lambda r) = \lambda u$. Thus, for any distribution $x \in \mathbb{R}_+^n$ with $\norm{x} = \lambda$, we have $T^r(x) \succ T^r(\lambda r)$, making $\lambda r$ the maximally diverse distribution for $\lambda$ students. Moreover, $T^r$ preserves student-body size: for any distribution $x$, $\norm{x} = \norm{T^r(x)}$. Finally, when $r = u$, $T^r$ reduces to the identity function, and we recover the standard equivalence between majorization and Robin Hood transfers with respect to the uniform measure.

The following example illustrates that the choice of $r$ can alter the comparative diversity ordering: distributions that are equivalent or incomparable under $u$ may become strictly ranked when evaluated relative to $r$.

\begin{example}\label{example:1}
Let $n=3$. Consider the following three type distributions: $x=(1,4,1)$, $y=(4,1,1),$ and $z=(3,0,3)$. Notice that $x\succ y$ and $y\succ x$ whereas $x\mathrel{\slashed \succ} z$ and  $z\mathrel{\slashed \succ} x$. Thus, under the uniform measure, $x \mathrel{\unrhd_u} y$ and $y \mathrel{\unrhd_u} x$, while $x$ and $z$ are incomparable to each other in terms of $u$-targeting diversity. If the target measure is instead $r=(1/6, 1/2, 1/3)$, the transformed distributions are $T^r(x)=(2, 3, 1)$, $T^r(y)=(5,0, 1)$, and $T^r(z)=(4,-1,3)$. Notice now that $T^r(z) \succ_s T^r(x)$ and $T^r(y) \succ_s T^r(x)$, whereas $T^r(z) \mathrel{\slashed\succ} T^r(y)$ and $T^r(y) \mathrel{\slashed\succ} T^r(z)$.  Hence, under measure $r$, $x \mathrel{\rhd_r} y$ and $x \mathrel{\rhd_r} z$, but $y$ and $z$ are now incomparable in terms of $r$-targeting diversity.
\end{example}\medskip

For the remainder of the paper, we fix a target measure $r \in \Delta^{n-1}$ and consider the restriction of the associated $r$-targeting diversity preorder to the subset of feasible distributions $\mathbb{Z}_+^n$.

\section{School Choice Rules and Diversity}\label{sec:choicerules}
A school can only admit students from among its applicants, so the type distributions it can achieve for its admitted student body are constrained by the applicants’ type distribution. If the applicant type distribution is given by $x\in\mathbb Z_+^n$, then the set of attainable student-body distributions is given by the \emph{budget set}:
\[
B(x)\coloneqq \big\{y\in \mathbb Z_+^n : y\leq x \text{ and } \norm{y} \leq q \big\}.
\]
Within this budget set, the subset of distributions that are maximal with respect to both the size and the diversity of the student body is given by the \emph{$r$-targeting Schur frontier}:
\[
F_r(x)\coloneqq \big\{y\in B(x) : \nexists z \in B(x) \,\text{ s.t. } z \mathrel{>} y \text{ or } z \mathrel{\rhd_r} y \big\}.
\]
In other words, the $r$-targeting Schur frontier is the subset of feasible distributions beyond which the school can neither admit strictly more students nor choose a strictly more $r$-diverse student body.\footnote{In \Cref{sec:frontier}, we provide several properties of the Schur frontier that we use to prove our main results.}

We now turn to our main question: \textbf{How should a school allocate its limited number of seats in a way that balances diversity and priority considerations?} We address this question using an axiomatic approach.

A \emph{choice rule} is a function $C:2^\S \to 2^\S$ that maps each nonempty set of applicants $S\subseteq \S$ into a set of admitted students $C(S)$ such that $C(S)\subseteq S$ and $|C(S)|\leq q$. We impose the natural restriction on the choice rule that a school can only admit students among its applicants, and that it has to respect its capacity.

Our first axiom says that the choice rule should admit as many students as possible, until the school's capacity constraint binds. In other words, no school seat shall go to waste.

\begin{axiom}
A choice rule $C$ is \textbf{non-wasteful} if, for every $S\subseteq \mathcal{S}$, $\abs{C(S)}=\min\{q,\abs{S}\}$.
\end{axiom}

Our second axiom says that, given the choice rule, the diversity of the admitted student body cannot be ``locally'' improved upon by swapping the admission of one student for another. 

\begin{axiom}\label{ax:bbiaseddiv}
A choice rule $C$ \textbf{promotes $r$-targeting diversity} if, for every $S\subseteq \mathcal{S}$, $s \in C(S)$ and $s' \in S\setminus C(S)$ imply $\xi \Bigl( \bigl(C(S) \setminus \{s\} \bigr) \cup \{s'\} \Bigr)\mathrel{\slashed\rhd_r} \xi(C(S))$. 
\end{axiom}

We use the term ``locally'' because Axiom~\ref{ax:bbiaseddiv} does not restrict the possibility that diversity can be improved by swapping the admission decisions of multiple students. Nevertheless, the following lemma shows that our two axioms together imply that the diversity of the admitted student body cannot be ``globally'' improved.

\smallskip
\begin{lemma}\label{lem:frontier-condition}
A choice rule $C$ is non-wasteful and promotes $r$-targeting diversity if and only 
if, for every $S\subseteq \S$, $\xi(C(S))\in F_r(\xi(S))$. 
\end{lemma}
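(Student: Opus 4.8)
The plan is to prove both implications, treating $x=\xi(S)$ and $c=\xi(C(S))$ as the relevant type distributions, where $c\le x$ coordinatewise (since $C(S)\subseteq S$) and $\norm{c}=\abs{C(S)}$. The reverse implication is the routine one, so I would dispatch it first, and then concentrate on the forward implication, where the only delicate point is ruling out a \emph{global} diversity improvement using only the \emph{local} (single-swap) hypothesis.

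For the reverse direction, assume $c\in F_b(x)$, so $c\in B(x)$ and no $z\in B(x)$ satisfies $z>c$ or $z\rhd_b c$. \emph{Nonwastefulness:} from $c\in B(x)$ I get $\norm{c}\le\min\{q,\norm{x}\}$; if this were strict, then $\norm{c}<\norm{x}$ gives some $\ell$ with $c_\ell<x_\ell$ and $\norm{c}<q$ gives $\norm{c}+1\le q$, so $c+\chi_\ell\in B(x)$ strictly exceeds $c$, contradicting the frontier property. Hence $\norm{c}=\min\{q,\norm{x}\}=\abs{C(S)}$. \emph{Promotes $b$-targeting diversity:} for $s\in C(S)$ of type $p$ and $s'\in S\setminus C(S)$ of type $\ell$, if $p=\ell$ the swapped distribution equals $c$ and $c\mathrel{\slashed{\rhd}}_b c$ trivially; if $p\neq\ell$, then $c'=c-\chi_p+\chi_\ell$ lies in $B(x)$ because $c'_\ell=c_\ell+1\le x_\ell$ (as $s'$ is an unchosen type-$\ell$ student) and $\norm{c'}=\norm{c}\le q$, so the frontier property forces $c'\mathrel{\slashed{\rhd}}_b c$, which is exactly the axiom.

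For the forward direction, assume $C$ is nonwasteful and promotes $b$-targeting diversity. As above, $c\in B(x)$, and no $z\in B(x)$ has $z>c$ (since $z>c$ would force $\norm{z}>\norm{c}=\min\{q,\norm{x}\}\ge\norm{z}$). It remains to show no $z\in B(x)$ satisfies $z\rhd_b c$. The key simplification is that $T_b$ is linear and fixes every difference $\chi_\ell-\chi_p$ (because $\norm{\chi_\ell-\chi_p}=0$), so a single swap $c\mapsto c-\chi_p+\chi_\ell$ becomes the unit transfer $u\mapsto u-\chi_p+\chi_\ell$ in the transformed coordinate $u\coloneqq T_b(c)$, and for any competitor $z$ with $\norm{z}=\norm{c}$ one has $u-v=c-z$ with $v\coloneqq T_b(z)$. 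By the two-coordinate characterization of majorization (vectors agreeing off two coordinates with equal pair-sum compare by nesting of their value-intervals), the swap \emph{strictly} improves diversity, i.e.\ $u\succ_s u-\chi_p+\chi_\ell$, precisely when $u_p-u_\ell>1$. Suppose toward a contradiction that some $z\in B(x)$ has $z\rhd_b c$, i.e.\ $u\succ_s v$ with $\norm{z}=\norm{c}$; writing $D=\{k:c_k>z_k\}$ and $U=\{k:c_k<z_k\}$, note every $p\in D$ is removable ($c_p\ge 1$) and every $\ell\in U$ is addable ($c_\ell<z_\ell\le x_\ell$), so the swap $c-\chi_p+\chi_\ell$ is always available. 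It therefore suffices to produce $p\in D$, $\ell\in U$ with $u_p-u_\ell>1$, since that single valid swap strictly improves diversity and contradicts the axiom.

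The main obstacle is exactly this gap claim, and I would establish it by contraposition, crucially exploiting that $c-z$ is \emph{integer-valued}. Assume $u_p-u_\ell\le 1$ for all $p\in D$, $\ell\in U$. Route the integer surplus $c-z$ from $D$ to $U$ one unit at a time, applying each unit transfer $u\mapsto u-\chi_p+\chi_\ell$ to the running vector in sequence. Along this process the coordinates in $D$ only decrease and those in $U$ only increase, so the inequality $u_p-u_\ell\le 1$ is preserved for every transfer actually performed; and a unit transfer between a higher coordinate and a lower coordinate differing by at most $1$ is an anti-equalizing move whose value-interval contains the old one, hence it weakly increases majorization. Accumulating all transfers yields $v\succ u$, contradicting $u\succ_s v$. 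This forces the existence of the desired pair with $u_p-u_\ell>1$, completing the argument. I expect the preservation of the gap bound under sequential unit transfers, together with the integrality of $c-z$ that guarantees each shipment has size at least one, to be the crux that makes the local hypothesis propagate to a global conclusion.
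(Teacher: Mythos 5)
Your proposal is correct, and for the hard (``only if'') direction it takes a genuinely different route from the paper. The paper gets that direction almost for free from machinery it builds anyway: Proposition~\ref{prop:frontier-bias-characterization}, which identifies $F_b(x)$ with the minimizers $M_b(x)$ of the quadratic $\sum_i T_b(y)_i^2$, and part (i) of Lemma~\ref{lem:frontier-ranking}, whose proof compares sums of squares against a minimizer to extract a pair $i^*,i_*$ with $T(y)_{i^*}-T(y)_{i_*}>1$, i.e.\ a strictly improving single swap inside $B(x)$ --- so in the paper, Lemma~\ref{lem:frontier-condition} is a short contradiction argument citing those two results. You instead prove the needed ``local-to-global'' step directly: assuming a global improvement $z\rhd_b c$ exists but every surplus/deficit pair has transformed gap at most $1$, you route the integer vector $c-z$ one unit at a time, note that each transfer performed has gap at most $1$ (surplus coordinates only fall, deficit coordinates only rise) and hence weakly increases the running vector in the majorization order, so $T_b(z)\succ T_b(c)$, contradicting $T_b(c)\succ_s T_b(z)$; the contrapositive then hands you a pair with gap exceeding $1$, and the integrality of $c-z$ guarantees the corresponding swap is realized by an actual chosen type-$p$ student and unchosen type-$\ell$ student, contradicting \autoref{ax:bbiaseddiv}. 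Your argument is more elementary and self-contained --- it never touches the $L^2$ characterization and rests only on the two-coordinate (value-interval) test and transfer characterization of majorization --- which is a cleaner way to see this one lemma in isolation; the paper's detour through $M_b(x)$ buys efficiency at the level of the whole paper, since Proposition~\ref{prop:frontier-bias-characterization} and Lemma~\ref{lem:frontier-ranking} are reused for nonemptiness of the frontier, for Lemma~\ref{lem:frontier-condition}, and for part (ii) of Theorem~\ref{thm:merit}. Your reverse direction is essentially the paper's, except that you re-derive inline the norm fact that the paper cites as Lemma~\ref{lem:frontier-bias-norm}, and you spell out the (trivial but worth noting) case where the swapped students share a type.
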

\smallskip

To understand our third axiom, consider an applicant set $S\subseteq \mathcal{S}$ and two students: an admitted student $s\in C(S)$ and a rejected student $s'\in S\setminus C(S)$ such that 
\[
\xi(C(S))\mathrel{\slashed \rhd_r}\xi\left( \bigl(C(S) \setminus \{s\} \bigr) \cup \{s'\} \right).
\]
In words, the school admits $s$ over $s'$ even though doing so does not improve the diversity of the student body. That is, if we counterfactually swapped the admissions of $s$ and $s'$ while holding all else fixed, the resulting diversity would be either the same or incomparable. In such cases, the school’s decision to admit $s$ rather than $s'$ should be interpreted as evidence that $s$ has higher priority than $s'$. By analogy with revealed preference theory, we may say that $s$ has higher \emph{revealed priority} than $s'$. Our third axiom states that the priority ranking uncovered by this revealed preference exercise should be acyclic.

\begin{axiom}
A choice rule $C$ satisfies the \textbf{$r$-targeting Schur-revealed preference axiom} if there are no sequences of students $\{s_k\}_{k=1}^K$ and sets $\{S_k\}_{k=1}^K$ such that, for every $k\in \{1,\ldots,K\}$, 
\begin{center}
$s_k\in C(S_k)$, $s_{k+1} \in S_k\setminus C(S_k)$, and 
$\xi(C(S_k))\mathrel{\slashed \rhd_r}\xi\Bigl( \bigl(C(S_k) \setminus \{s_k\} \bigr) \cup \{s_{k+1}\} \Bigr)$.\end{center}
where $s_{K+1}=s_1$.    
\end{axiom}

Our three axioms capture normative principles particularly well-suited for school choice. The first axiom reflects an efficiency concern: scarce educational resources should be fully utilized, and no seat should remain empty when there are willing applicants. In practice, schools may also face legal or institutional requirements to fill all available seats. The second and third axioms together provide a framework for reconciling two competing objectives in admissions. The second axiom reflects the institution’s diversity goal: conditional on filling its seats, the admitted class should be as $r$-diverse as possible, in the sense that no simple substitution of one student for another could yield a strictly more diverse student body. While administrators may find it infeasible to evaluate diversity under entirely different admissions scenarios, this axiom offers a tractable and transparent yardstick. The third axiom captures the principle that, absent diversity considerations, admissions decisions should maximize priorities consistently. Taken together, the axioms can be interpreted as selecting an outcome on the diversity-priority Pareto frontier.

Having specified our axioms, we next characterize the class of choice rules that satisfy them. To this end, we define the \emph{$r$-targeting Schur choice rule} for a given priority ranking $P$ over $\S$, denoted by $C^P_r$.\footnote{The reference to Schur in the name of the choice rule alludes to “Schur concavity,” a property of functions that are monotone with respect to the majorization order \citep{marshallolkin}.}

\begin{quote}
	\noindent{}{\bf $r$-targeting Schur Choice Rule}\\
	\noindent{}{\bf Input:} A set of students $S\subseteq \mathcal{S}$. Let $k \coloneq \abs{S}$. 
  Label students in $S$ such that $s_1 \mathrel{P} \ldots \mathrel{P} s_k$. \\
    \noindent{}{\bf Initialization:} Set $S_0 \coloneq \emptyset$.\\
	\noindent{}{\bf Steps $i \in \{1,\ldots ,k\}$:} If $\xi\bigl(S_{i-1}\cup \{s_i\}\bigr)\leq x$ for some $x \in F_r\bigl(\xi(S)\bigr)$, then set $S_i \coloneq S_{i-1}\cup \{s_i\}$. Otherwise, set $S_i \coloneq S_{i-1}$.\\
    \noindent{}{\bf Output:} $S_k=C^P_r(S)$.
\end{quote}

Simply put, an $r$-targeting Schur choice rule admits the students with the highest priority, provided that doing so is consistent with a distribution that is maximally $r$-diverse among the non-wasteful distributions in the budget set. An $r$-targeting Schur rule depends on two parameters: the target $r$ and a priority ranking $P$. Our first main result shows that any choice rule satisfying our three axioms must be an $r$-targeting Schur choice rule for some priority ranking.

\smallskip
\begin{theorem}\label{thm:bias-choice-rule} 
A choice rule $C$ is non-wasteful, promotes $r$-targeting diversity, and satisfies $r$-targeting Schur-revealed preference if and only if $C=C^P_r$ for some priority ranking $P$ over $\mathcal{S}$.
\end{theorem}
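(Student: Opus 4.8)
The plan is to reduce the first two axioms to a single geometric condition and then to recast the whole statement in the language of matroids. By \autoref{lem:frontier-condition}, a rule is nonwasteful and promotes $b$-targeting diversity exactly when $\xi(C(S))\in F_b(\xi(S))$ for every $S$, so I would work with this \emph{frontier condition} in place of the first two axioms throughout. I would first record two elementary facts. If $\norm{x}\le q$ then $F_b(x)=\{x\}$, since any $y\in B(x)$ with $y\ne x$ satisfies $y<x$ and is dominated. More generally every point of $F_b(x)$ has the same sum $m\coloneqq\min\{q,\norm{x}\}$: a point with smaller sum can be raised in some coordinate while staying $\le x$ and $\le q$, contradicting maximality. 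Consequently, if $\xi(T)\le x^{\ast}$ for some $x^{\ast}\in F_b(\xi(S))$ and $\abs{T}=m$, then $\xi(T)=x^{\ast}$, so the size-$m$ sets $T\subseteq S$ realizing a frontier point are exactly those with $\xi(T)\in F_b(\xi(S))$.

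The conceptual core, and the step I expect to be the main obstacle, is a structural lemma: the frontier $F_b(\xi(S))$ is M-convex, i.e.\ for $u,v\in F_b(\xi(S))$ and any coordinate $i$ with $v_i>u_i$ there is a coordinate $j$ with $v_j<u_j$ and $v-\chi_i+\chi_j\in F_b(\xi(S))$. Equivalently, declaring $T\subseteq S$ \emph{independent} when $\xi(T)\le x^{\ast}$ for some $x^{\ast}\in F_b(\xi(S))$ turns $S$ into a matroid $M_S$ whose bases are precisely the frontier-realizing sets. I would prove this through the characterization anticipated in \autoref{prop:frontier-bias-characterization}, which identifies $F_b(\xi(S))$ with the feasible integer distributions closest to $m\cdot r$, namely the minimizers over the M-convex box slice $\{z\in\mathbb Z_+^n:z\le \xi(S),\ \norm{z}=m\}$ of the strictly convex separable potential $\sum_\ell(z_\ell-m r_\ell)^2=\sum_\ell(T_b(z)_\ell-m/n)^2$; minimizers of a separable convex function over an M-convex set again form an M-convex set. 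I would pair this with an auxiliary fact used repeatedly below: for any $a\in\Re^n$ and $i\ne j$, the vectors $a$ and $a-\chi_i+\chi_j$ are always $\succ$-comparable, which follows from $x\succ y\iff\sum_\ell(x_\ell-t)_+\ge\sum_\ell(y_\ell-t)_+$ for all $t$ (with $(\cdot)_+$ the positive part) together with $\norm{x}=\norm{y}$.

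For the ``if'' direction I would note that the $b$-targeting Schur rule is exactly the matroid greedy algorithm on $M_S$ with weights given by $P$. Its output $S_k$ is nonwasteful: if $\abs{S_k}<m$ then, writing $v=\xi(S_k)\le x^{\ast}\in F_b(\xi(S))$, some coordinate $i$ has $v_i<x^{\ast}_i$, so a rejected type-$i$ student $s'$ exists; but at $s'$'s step the current set $S_{j-1}\subseteq S_k$ gives $\xi(S_{j-1})+\chi_i\le v+\chi_i\le x^{\ast}$, so $s'$ should have been admitted---a contradiction. Hence $\abs{S_k}=m$ and $\xi(S_k)\in F_b(\xi(S))$, giving the frontier condition. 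For the revealed-preference axiom it suffices to show that any witnessed relation with $s\in C(S)$, $s'\in S\setminus C(S)$, and $\xi((C(S)\setminus\{s\})\cup\{s'\})\unrhd_b\xi(C(S))$ forces $s\mathrel{P}s'$: the swapped distribution $w$ is then itself a frontier point, and if instead $s'\mathrel{P}s$ then $s'$ is processed first, yet $\xi(S_{\ell-1})+\chi_{\text{type}(s')}\le w$ shows $s'$ should have been admitted. Since every revealed comparison respects $P$, a cycle would produce a $P$-cycle, impossible as $P$ is a strict order.

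For the ``only if'' direction I would define the revealed relation $s\succ_C s'$ whenever some $S$ witnesses $s\in C(S)$, $s'\in S\setminus C(S)$, and $\xi((C(S)\setminus\{s\})\cup\{s'\})\unrhd_b\xi(C(S))$; the $b$-targeting Schur-revealed preference axiom says precisely that $\succ_C$ is acyclic, so its transitive closure is a strict partial order that extends to a strict linear order $P$ by Szpilrajn's theorem. Fix $S$; the frontier condition makes $C(S)$ a basis of $M_S$, and I would show it is the unique $P$-maximum-weight basis, hence the greedy output. By the matroid optimality criterion it is enough to check that whenever $s\in C(S)$, $s'\in S\setminus C(S)$, and $(C(S)\setminus\{s\})\cup\{s'\}$ is again a basis, we have $s\mathrel{P}s'$. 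But then $\xi(C(S))$ and $\xi((C(S)\setminus\{s\})\cup\{s'\})$ are two frontier points differing by a single transfer (or equal), hence $\succ$-comparable by the auxiliary fact and, both being maximal, equally $b$-diverse; thus the swap is $\unrhd_b\xi(C(S))$, giving $s\succ_C s'$ and so $s\mathrel{P}s'$. Therefore $C(S)$ equals the greedy output, i.e.\ the $b$-targeting Schur choice rule for $P$, completing the characterization.
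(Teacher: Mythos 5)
Your proposal is correct, and it reaches the theorem by a genuinely different technical route, even though the outer skeleton (reduce the first two axioms via \autoref{lem:frontier-condition}, define the revealed relation, extend it by Szpilrajn, then exploit an exchange property of frontier-realizing sets) coincides with the paper's. The key difference is how the exchange property is obtained and how the only-if direction is closed. The paper proves the exchange property by hand: \autoref{lem:frontier-bias-permute} shows any two frontier distributions are coordinate permutations of each other differing by unit transfers, and \autoref{lem:counterfactual} lifts this to the student-level symmetric exchange used both in Theorem~\ref{thm:bias-choice-rule} and later in \autoref{lem:base}. You instead import discrete convex analysis: via \autoref{prop:frontier-bias-characterization} the frontier is the argmin of a separable convex potential over the M-convex set $\{z\in\mathbb Z_+^n: z\le\xi(S),\ \norm{z}=m\}$, and Murota's preservation theorem makes that argmin M-convex, hence a matroid base family after pulling back to students. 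Then, where the paper closes the only-if direction by a direct contradiction between $C$ and the Schur rule $C'$ built from the Szpilrajn completion, you invoke the matroid local-vs-global optimality criterion: local $P$-optimality under single swaps (which you verify with your comparability fact for $a$ versus $a-\chi_i+\chi_j$, plus mutual maximality of two frontier points) forces $C(S)$ to be the unique maximum-weight basis, i.e.\ the greedy output. Your route is shorter modulo citations, is arguably more conceptual (it effectively derives \autoref{lem:base} and \autoref{lem:greedy} as byproducts), and your if-direction is more careful than the paper's at one point the paper glosses with ``by construction,'' namely why the algorithm's output actually lands in $F_b(\xi(S))$. What it costs is self-containedness: you lean on two external results (M-convexity of argmins of separable convex functions over M-convex sets, and the matroid exchange/optimality theory), and the step ``M-convexity of $F_b(\xi(S))$ in type space is equivalent to the matroid structure on students'' is asserted rather than argued--- that lifting is precisely the two-case argument (same type present in both sets versus types with unequal counts) that constitutes the paper's proof of \autoref{lem:counterfactual}, and it deserves a couple of lines in a complete write-up. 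Neither of these is a gap in correctness, but they are the places a referee would ask you to fill in or cite precisely.
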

\smallskip

\autoref{thm:bias-choice-rule} characterizes the choice rules for which there exists a priority ranking that renders the rule equivalent to an $r$-targeting Schur rule. This is a revealed preference exercise, in the sense that the priority ranking is inferred from the choice rule. Once a priority ranking is fixed---either exogenously or endogenously via \autoref{thm:bias-choice-rule}---it can be used to compare the admitted student body under different choice rules. This comparison notion is analogous to first-order stochastic dominance.

\smallskip
\begin{definition}
Given a priority ranking $P$ and two distinct sets of students $S, S' \subseteq \cals$, $S$ \emph{$P$-priority dominates}  $S'$, 
if $\abs{S}\geq \abs{S'}$ and, for every $k \leq \abs{S'}$, if $s_k$ is the student with the $k$-th highest priority ranking in $S$ and $s'_k$ is the student with the $k$-th highest priority ranking in $S'$, then $s_k \mathrel{P} s'_k$ or $s_k=s'_k$. 
\end{definition}
\smallskip

To rephrase, $S$ $P$-priority dominates $S'$ if $S$ contains at least as many students as $S'$, and when the two sets are compared student-by-student in order of priority, the student in $S$ always has weakly higher priority than the corresponding student in $S'$.

Our second main result, \autoref{thm:merit}, compares the outcomes of the $r$-targeting Schur choice rule with those of any alternative choice rule. Whenever the $r$-targeting Schur rule admits a different set of students than the alternative choice rule, then it either admits more students, a more $r$-diverse set of students, or a priority-dominating set of students.

\smallskip
\begin{theorem}\label{thm:merit}
Fix a priority ranking $P$ over $\mathcal{S}$. Let $C$ be any choice rule such that $C(S) \neq C^P_r(S)$ for some $S \subseteq \mathcal{S}$. Then either
\begin{enumerate}[label={$(\roman*)$}]
\item $|C^P_r(S)|>|C(S)|$, or
\item $\xi(C^P_r(S)) \mathrel{\rhd_r} \xi(C(S))$, or
\item $C^P_r(S)$ $P$-priority dominates $C(S)$.
\end{enumerate}
\end{theorem}

Therefore, once our notion of comparative diversity is adopted, \autoref{thm:merit} establishes that the $r$-targeting Schur choice rule always selects an outcome on the diversity-priority Pareto frontier. In other words, any attempt to improve the outcome in terms of diversity or priority necessarily requires either a trade-off along the other dimension or an inefficient allocation in which some seats remain unfilled.

The proof of \autoref{thm:merit} builds on a structural connection to greedy algorithms in matroid theory. Specifically, we show that the maximally $r$-diverse student bodies that can be admitted from a given applicant pool form the bases of a matroid. This structure ensures 
that greedy methods yield optimal admission policies in the sense of \autoref{thm:merit}. To make this connection concrete, we introduce an alternative greedy algorithm and show that it coincides with the $r$-targeting Schur choice rule.

To this end, we first introduce some useful notation and terminology. A 
\emph{matroid} is a pair $(\mathcal{S}, \mathcal{I})$ where $\mathcal{I}$ is a collection of subsets of $\mathcal{S}$. Each element in $\mathcal{I}$ is called an \emph{independent set}. A maximal independent set, in the sense of set inclusion, is called a \emph{base}. The set of bases of a matroid, say $\mathcal{B}$, is characterized by the following \textit{matroid base axioms}: 
\begin{itemize}[parsep = 0pt, listparindent=1em, itemsep=0pt]
\item[(B1)] $\mathcal{B}$ is nonempty and 
\item[(B2)] for every $S\in \mathcal{B}$, $S' \in \mathcal{B}$, and $s \in S \setminus S'$, there exists $s' \in S' \setminus S$ such that 
$(S \setminus \{s\}) \cup \{s'\} \in \mathcal{B}$ and $(S' \setminus \{s'\}) \cup \{s\} \in \mathcal{B}$.  
\end{itemize}

\smallskip
\begin{lemma}\label{lem:base}
Let $S\subseteq \cals$. Then $\mathcal{B}(S) \coloneq \{S' \subseteq S : \xi(S') \in F_r(\xi(S))\}$ 
satisfies the matroid base axioms.
\end{lemma}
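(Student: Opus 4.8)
The plan is to verify the two matroid base axioms for $\mathcal{B}(S)$ directly, reducing (B2) to a symmetric exchange among type distributions and then lifting it to students. Throughout I write $x:=\xi(S)$ and $m:=\min\{q,\norm{x}\}$, and for a type distribution $y$ I abbreviate $u:=T_b(y)$. I would first record three facts about the frontier $F_b(x)$. \emph{(i)} Every $y\in F_b(x)$ has $\norm{y}=m$: if $\norm{y}<m$ then $\norm{y}<q$ and $\norm{y}<\norm{x}$, so some coordinate may be increased within the budget, yielding a point of $B(x)$ that $>$-dominates $y$; hence the $>$-clause in the definition of $F_b$ is vacuous on the frontier, and membership is decided entirely by $\rhd_b$. \emph{(ii)} (No improving swap.) If $y\in F_b(x)$, $y_i>0$ and $y_j<x_j$, then $z:=y-\chi_i+\chi_j\in B(x)$ with $\norm{z}=\norm{y}$, so by definition $z$ cannot be strictly more $b$-diverse than $y$; since $T_b(z)$ is obtained from $u$ by a single unit transfer from coordinate $i$ to coordinate $j$, the Robin Hood characterization of majorization makes this equivalent to $u_i-u_j\le 1$. \emph{(iii)} If moreover $u_i-u_j=1$, that transfer merely exchanges the $i$- and $j$-values of $u$, so $T_b(z)$ is a permutation of $u$; as $\rhd_b$ depends only on the sorted vector, $z\in F_b(x)$ as well.

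Axiom (B1) is then immediate: $B(x)$ is finite and nonempty, and the relation ``$z>y$ or $z\rhd_b y$'' is acyclic because $\norm{\cdot}$ is nondecreasing along it and strictly increasing along $>$ while $\rhd_b$ is itself a strict order, so maximal elements exist and $F_b(x)\neq\emptyset$; since any $y\le x$ is realized as $\xi(S')$ for some $S'\subseteq S$, we get $\mathcal{B}(S)\neq\emptyset$. For (B2) I would take $A,A'\in\mathcal{B}(S)$, set $y:=\xi(A)$ and $y':=\xi(A')$ (both in $F_b(x)$, hence both of size $m$ by \emph{(i)}), and let $s\in A\setminus A'$ have type $i$. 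If $y_i\le y'_i$, a counting argument on type-$i$ counts produces a type-$i$ student $s'\in A'\setminus A$, and swapping $s$ for $s'$ changes neither type distribution, so $(A\setminus\{s\})\cup\{s'\}$ and $(A'\setminus\{s'\})\cup\{s\}$ remain in $\mathcal{B}(S)$. The substantive case is $y_i>y'_i$, which I reduce to an exchange among type distributions.

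The crux is the claim: if $y,y'\in F_b(x)$ and $y_i>y'_i$, there is $j$ with $y_j<y'_j$ such that both $y-\chi_i+\chi_j$ and $y'-\chi_j+\chi_i$ lie in $F_b(x)$. I would first prove a \emph{width-one} bound $\abs{y_k-y'_k}\le 1$ for all $k$. Suppose $y_i-y'_i\ge 2$ and pick $j$ with $y_j<y'_j$; applying \emph{(ii)} to $y$ (transfer $i\to j$) and to $y'$ (transfer $j\to i$) gives $u_i-u_j\le 1$ and $u'_j-u'_i\le 1$, where $u'=T_b(y')$. Because $\norm{y}=\norm{y'}=m$, the bias contributions in $T_b$ cancel when these are added, so the sum equals $(y_i-y'_i)+(y'_j-y_j)\ge 3$, contradicting the bound of $2$. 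Granting width-one, $y_i=y'_i+1$ and any $j$ with $y_j<y'_j$ has $y_j=y'_j-1$; the same two applications of \emph{(ii)} now give two quantities, each $\le 1$, whose sum is exactly $2$, forcing $u_i-u_j=1$ and $u'_j-u'_i=1$. By \emph{(iii)} both transfers produce permutations of $u$ and of $u'$, so both results lie in $F_b(x)$. To complete (B2) I lift to students: since $y_j<y'_j$, counting supplies a type-$j$ student $s'\in A'\setminus A$, and the swaps $(A\setminus\{s\})\cup\{s'\}$ and $(A'\setminus\{s'\})\cup\{s\}$ realize exactly $y-\chi_i+\chi_j$ and $y'-\chi_j+\chi_i$, hence belong to $\mathcal{B}(S)$.

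The step I expect to be the main obstacle is the \emph{simultaneous} exchange demanded by (B2): a single $j$, equivalently a single student $s'$, must repair both $A$ and $A'$ at once. The one-sided inequalities from \emph{(ii)} are routine, but coordinating both sides is exactly what matroid base exchange requires; here it is delivered by the additive cancellation — two bounds of $1$ summing to exactly $2$ and thereby pinning each gap to $1$, which via \emph{(iii)} turns each transfer into a value-swap of $T_b$. This is precisely where the special geometry of $F_b(x)$, namely majorization-maximal diversity over a box-and-budget set, is used. The supporting ingredients (that the frontier sits at the single level $m$, that single Robin Hood swaps already detect the frontier, and the width-one bound) I view as necessary but routine scaffolding.
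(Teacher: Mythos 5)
Your proof is correct, and it takes a genuinely leaner route than the paper's. The paper derives (B1) from \autoref{prop:frontier-bias-characterization}, which identifies $F_b(x)$ with the minimizers of $\sum_{i=1}^n T_b(y)_i^2$ over full-size budget vectors, and derives (B2) entirely from \autoref{lem:counterfactual}, whose proof in turn rests on \autoref{lem:frontier-bias-permute} --- the global characterization that any two frontier distributions are involutory permutations of each other's transformed values around a common constant $a$, established through integrality and floor-function arguments (Claims 1--4). You bypass both pieces of machinery. For (B1) you use acyclicity: $\rhd_b$ preserves the coordinate sum and is a strict partial order, while $>$ strictly increases the sum, so the improvement relation defining $F_b(x)$ has maximal elements on the finite set $B(x)$. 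For the distribution-level exchange you add the two local no-improving-swap bounds $T_b(y)_i-T_b(y)_j\le 1$ and $T_b(y')_j-T_b(y')_i\le 1$; since both frontier points have the same total, the bias terms cancel in the sum, which therefore equals $(y_i-y'_i)+(y'_j-y_j)$. This pins every coordinate gap to one, forces both bounds to be tight, and a tight unit transfer is a value swap of the transformed vector, hence stays on the frontier. Your case split (equal type-$i$ counts handled by exchanging same-type students, unequal counts reduced to the transfer claim) and the counting argument lifting distributions to students coincide with the paper's proof of \autoref{lem:counterfactual}. The trade-off: the paper's structural results (\autoref{lem:frontier-bias-permute}, \autoref{prop:frontier-bias-characterization}) are reused elsewhere --- in \autoref{lem:frontier-condition} and in Theorems~\ref{thm:bias-choice-rule} and~\ref{thm:merit} --- so proving them once pays off globally, whereas your argument is self-contained for this lemma and replaces the paper's floor-function bookkeeping (the integrality step in Claim~2) with a single cancellation observation that works directly with the possibly non-integer transformed coordinates.
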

\smallskip

We now define the $r$-targeting \emph{greedy choice rule} for a given priority ranking $P$ over $\mathcal S$, which we denote $G^P_r:2^{\mathcal{S}} \rightarrow 2^{\mathcal{S}}$.

\begin{quote}
	\noindent{}{\bf $r$-targeting Greedy Choice Rule}\\ 
	\noindent{}{\bf Input:} A set of students $S\subseteq \mathcal{S}$. Let $k \coloneq \abs{S}$. 
  Label students in $S$ such that $s_1 \mathrel{P} \ldots \mathrel{P} s_k$. \\
    \noindent{}{\bf Initialization:} Set $S_0 \coloneq \emptyset$.\\
	\noindent{}{\bf Steps $i \in \{1,\ldots ,k\}$:} If $S_{i-1} \cup \{s_i\} \subseteq S'$ for some $S' \in \mathcal{B}(S)$,
then set $S_i \coloneq S_{i-1} \cup \{s_i\}$. Otherwise, set $S_i \coloneq S_{i-1}$.\\
    \noindent{}{\bf Output:} $S_k=G^P_r(S)$.
\end{quote}

The next result says that the two choice rules that we have introduced are one and the same.

\smallskip
\begin{proposition}\label{prop:greedy}
Fix a priority ranking $P$ over $\mathcal{S}$. The $r$-targeting greedy choice rule is the same as the $r$-targeting Schur choice rule, i.e., $G^P_r=C^P_r$.
\end{proposition}
\smallskip

We conclude this section with a simple illustration of the $r$-targeting Schur choice rule for a fixed priority ranking.

\begin{example}
There are $n = 2$ student types: blue ($i = 1$) and red ($i = 2$). The school’s capacity is $q = 3$, and the priority ranking $P$ is given by \[s_1 \mathrel{P} s_2 \mathrel{P} s_3 \mathrel{P} s_4 \mathrel{P} s_5.\] The target distribution is $r = u$. There are five applicants, $S = \{s_1, s_2, s_3, s_4, s_5\}$, with $s_1$, $s_2$, and $s_3$ being blue and $s_4$ and $s_5$ being red. Thus, $\xi(S)=(3,2)$ and
\[
B(\xi(S))=\{y\in \mathbb{Z}_+^2: y_1\leq 3,\, y_2\leq 2,\, \text{ and } \, y_1+y_2\leq 3 \}.
\]
A choice rule that admits students based solely on priority would yield $\{s_1, s_2, s_3\}$, consisting only of blue types. By contrast, the $u$-targeting Schur frontier is
\[
F_u(\xi(S))=\{(2,1), (1,2)\},
\]
which, given the budget set, approximates an equal representation of red and blue students. Consequently,  
\[C^P_u(S)=G^P_u(S)=\{s_1, s_2, s_4\}.\] 
In other words, once $s_1$ and $s_2$ are admitted, the choice rule skips over $s_3$ in favor of $s_4$ to achieve greater diversity, even though $s_3$ has strictly higher priority than $s_4$.
\end{example}

\section{Comparison to Reserves and Quotas}\label{sec:reservesandquotas}
Much of the existing literature, as well as many real-world policies, relies on reserves or quotas to reconcile schools' dual objectives. These policies impose rigid constraints on the choice rule, limiting its flexibility and making it difficult to adapt admissions outcomes to the specific composition of applicants.

Given our notion of comparative diversity, \autoref{thm:merit} already establishes that no other choice rule---including rules based on reserves or quotas---can outperform the $r$-targeting Schur choice rule along both diversity and priority dimensions. A natural question arises: is the $r$-targeting Schur rule equivalent to a choice rule with reserves and quotas?

A choice rule with type-specific quotas fails to satisfy non-wastefulness \citep{echyen12}. Intuitively, under any quota policy, one can construct an applicant set in which the type distribution of applicants causes a quota to bind, leaving seats unfilled despite there being more applicants than the school's capacity. In contrast, the $r$-targeting Schur choice rule is non-wasteful and therefore cannot be equivalent to a choice rule that incorporates quotas.

To answer the question, we argue that the $r$-targeting Schur choice rule is not equivalent to any choice rule with reserves. To see this, suppose there are $n = 3$ student types and consider the uniform target $r = u$. Let the priority ranking $P$ place type-1 students above type-2 students, and type-2 students above type-3 students. The school has a capacity of $q = 5$.

A \emph{reserve policy} is a vector $R \in \mathbb{Z}^3_+$ with $\norm{R} \leq q = 5$, where for each $i \in \{1,2,3\}$, $R_i$ represents the number of seats reserved for type-$i$ students. Following \cite{ehayeyi14} and \cite{echyen12}, the rule first admits $\min\{x_i, R_i\}$ students of each type $i$ based on priority. It then fills the remaining
\[
5-\sum_{i\in\mathcal{N}}\min\{x_i, R_i\}
\] 
\emph{open seats} following the remaining students' priority.

Consider three sets of applicants: $S_1$ with $\xi(S_1)=(5,5,0)$, $S_2$ with $\xi(S_2)=(5,0,5)$, and $S_3$ with $\xi(S_3)=(5,3,2)$. Towards a contradiction, suppose 
that the reserve rule coincides with the $u$-targeting Schur rule in all three cases.

For input $S_1$, the budget set is 
    \[
B(\xi(S_1))=\{y\in \mathbb{Z}_+^3: y_1\leq 5,\, y_2\leq 5,\, y_3=0 \text{ and } \, y_1+y_2\leq 5 \},
\]
and the $u$-targeting Schur frontier is $F_u(\xi(S_1))=\{(3,2,0), (2,3,0)\}$, which approximates an equal representation of types 1 and 2 given the budget set. Since type-1 students have a higher priority than type-2 students, $\xi(C^P_u(S_1))=(3,2,0)$.  

Under the reserve rule, all open seats are allocated to type-$1$ students, yielding a distribution $(5-R_2, R_2, 0)$. Hence, the reserve rule coincides with the Schur choice rule if and only if $R_2 = 2$.

For input $S_2$, the budget set is 
    \[
B(\xi(S_2))=\{y\in \mathbb{Z}_+^3: y_1\leq 5,\, y_2=0,\, y_3\leq 5 \text{ and } \, y_1+y_3\leq 5 \},
\]
and the $u$-targeting Schur frontier is $F_u(\xi(S_2))=\{(3,0,2), (2,0,3)\}$, which approximates an equal representation of types 1 and 3 given the budget set. Since type-1 
students have a higher priority than type-3 students, $\xi(C^P_u(S_2))=(3,0,2)$.  

Under the reserve rule, all open seats are allocated to type-$1$ students, yielding 
$(5-R_3, 0, R_3)$. Thus, the reserve rule coincides with the Schur choice rule if and 
only if $R_3 = 2$.

For input $S_3$, the budget set is 
    \[
B(\xi(S_3))=\{y\in \mathbb{Z}_+^3: y_1\leq 5,\, y_2\leq 3,\, y_3\leq 2 \text{ and } \, y_1+y_2+y_3\leq 5 \},
\]
and the $u$-targeting Schur frontier is $F(\xi(S_3))=\{(2,2,1), (1,2,2), (2,1,2)\}$. Since type-1 students have a higher priority than type-2 students, and since type-2 students have a higher priority than type-3 students, we get $\xi(C^P_u(S_3))=(2,2,1)$.  

Under the reserve rule with $R_2 = R_3 = 2$, the resulting distribution is $(1,2,2)$, which contradicts the Schur choice outcome.

To conclude, no choice rule that uses both reserves and quotas 
can replicate the outcome of the $u$-targeting Schur rule across all 
applicant sets. The reason is the rigidity imposed by 
reserves before observing the type distribution of applicants. While unused reserves can 
revert to open seats, the reserves rule lacks the adaptability of the Schur choice rule, 
which tailors target distributions to the realized applicant pool.

\section{Two-sided Markets}\label{sec:markets}
In this section, we consider two-sided matching markets, such as those between students and 
schools. Schools aim to admit students of the highest quality while also maintaining diverse classes. Let $\mathcal{C}$ denote the set of schools. Each student $s \in \mathcal{S}$ has a preference relation over $\mathcal{C} \cup \{s\}$, where $s$ represents the outside option for student $s$, such as homeschooling or attending a private school. A school 
$\alpha \in \mathcal{C}$ is \textbf{acceptable} to student $s \in \mathcal{S}$ if the 
student prefers $\alpha$ to their outside option.

It is well-known in the market-design literature that desirable matchings between students and schools exist, and that the deferred-acceptance (DA) algorithm from \cite{gale62} can be applied, provided school choice rules satisfy \textit{path independence} and \textit{the law of aggregate demand} \citep{hatmil05}.\footnote{Path independence is equivalent to \textit{substitutability} together with a mild consistency axiom. See the discussion in the proof of \autoref{thm:pi}.} Specifically, when school choice rules satisfy path independence, the DA algorithm produces a \textit{stable} matching between students and schools such that:
\begin{enumerate}[label=$(\roman*)$, parsep = 0pt, listparindent=1em, itemsep=0pt]
    \item Each student is either matched with an acceptable school or remains unmatched.
    \item Each school wants to keep the students assigned to it.
    \item No student-school pair exists where the student strictly prefers the school to their assigned match and the school is willing to accept the student.
\end{enumerate}
Moreover, when school choice rules also satisfy the law of aggregate demand, the DA algorithm is \textit{strategy-proof}, meaning no student can achieve a more preferred outcome by misreporting their preferences.

We first define the student-proposing deferred-acceptance algorithm for a given choice rule profile $(C_{\alpha})_{\alpha \in \mathcal{C}}$, and then introduce path independence and the law of aggregate demand.

\begin{quote}
	\noindent{}{\bf Student-Proposing Deferred-Acceptance Algorithm}
    
	\noindent{}{\bf Input:} A profile of student preferences.
    
	\noindent{}{\bf Step 1:} Each student $s\in \mathcal{S}$ proposes to the most preferred acceptable school, if such a school exists. For each school $\alpha \in \mathcal{C}$, let $S_1^{\alpha}$ be the set of students who propose to the school. School $\alpha$ tentatively accepts students in $C_{\alpha}(S_1^{\alpha})$ and permanently rejects $S_1^{\alpha} \setminus C_{\alpha}(S_1^{\alpha})$. If there are no rejections, then stop. 

    \noindent{}{\bf Step $k \geq 2$:} Each student $s\in \mathcal{S}$ who was rejected at Step $k-1$ proposes to the most preferred acceptable school that hasn't rejected them yet, if such a school exists. For each school $\alpha \in \mathcal{C}$, let $S_k^{\alpha}$ be the union of the set of students who proposed to the school at Step $k$ and the set of students tentatively accepted at Step $k-1$. School $\alpha$ tentatively accepts students in $C_{\alpha}(S_k^{\alpha})$ and permanently rejects $S_k^{\alpha} \setminus C_{\alpha}(S_k^{\alpha})$. If there are no rejections, then stop. 
    
    \noindent{}{\bf Output:} The matching produced by the acceptances at the last step.
\end{quote}

The student-proposing deferred acceptance algorithm terminates since there can only be a finite number of rejections. Furthermore, it works well under the following conditions:\footnote{\cite{plott1973path} introduces path independence in the context of social choice. See \cite{chayen17} for a discussion of path independence in a matching context. The law of aggregate demand was introduced by \cite{hatmil05}. It is also known as size monotonicity \citep{alkan03}.}

\begin{axiom}
Choice rule $C$ satisfies \textbf{path independence} if, for every $S,S' \subseteq \mathcal{S}$,
\[C(S \cup S')=C\left(C(S) \cup S'\right).\]
\end{axiom}

\begin{axiom}
Choice rule $C$ satisfies \textbf{the law of aggregate demand} if, for every $S,S' \subseteq \mathcal{S}$, 
\[S\supseteq S'  \implies   \abs{C(S)} \geq \abs{C(S')}.\]
\end{axiom}
Since non-wastefulness implies the law of aggregate demand, the $r$-targeting Schur choice rule satisfies this property. The following result establishes that the $r$-targeting Schur choice rule also satisfies path independence.

\begin{theorem}\label{thm:pi}
The $r$-targeting Schur choice rule satisfies path independence. 
\end{theorem}

Next, we provide an example in which multiple schools employ Schur choice rules with different target levels. We then run the DA algorithm in this setting.

\begin{example}
There are $n=2$ student types: blue ($i=1$) and red ($i=2$). Let $\mathcal{S}=\{s_1, \ldots, s_7\}$ denote the set of students, where $s_1$--$s_4$ are blue and $s_5$--$s_7$ are red. 
Consequently, the distribution $\xi(\mathcal{S})$ is $(4,3)$.

There are two schools, $\alpha$ and $\beta$, each with a capacity of $q=3$ and the 
same priority ranking $P$ given by $s_1 \mathrel{P} \ldots \mathrel{P} s_7.$ Both schools use the $r$-targeting Schur choice rule but differ in their target measures: $r_\alpha = u$ (the uniform measure) and $r_\beta = (1/4, 3/4)$. The respective choice rules are $C_\alpha = C^P_{r_\alpha}$ and $C_\beta = C^P_{r_\beta}$.  

Student preferences are as follows: even-numbered students prefer $\alpha$ over $\beta$, odd-numbered students prefer $\beta$ over $\alpha$, and all students strictly prefer being matched to a school rather than remaining unmatched.

We now run the DA algorithm in this setting. 
\medskip

\noindent \textbf{Round 1:}  
Even-numbered students apply to $\alpha$, so $S_1^\alpha=\{s_2,s_4,s_6\}$. Since $\abs{S_1^\alpha}=q$ and the Schur rule is non-wasteful, $C_\alpha(S^\alpha_1)=\{s_2,s_4,s_6\}$.  

Odd-numbered students apply to $\beta$, so $S_1^\beta=\{s_1,s_3,s_5,s_7\}$. Here $\xi(S_1^\beta)=(2,2)$ and the Schur frontier is $F_{r_\beta}(S_1^\beta)=\{(1,2)\}$. Thus, $C_{\beta}(S^\beta_1)=\{s_1,s_5,s_7\}$, leaving $s_3$ unmatched.\smallskip  

\noindent \textbf{Round 2:}  
Student $s_3$ applies to $\alpha$, so $S_2^\alpha=\{s_2,s_3,s_4,s_6\}$. Now, $\xi(S_2^\alpha)=(3,1)$ and $F_{r_\alpha}(S_2^\alpha)=\{(2,1)\}$. Thus, the school admits $C_{\alpha}(S^\alpha_2)=\{s_2,s_3,s_6\}$, leaving $s_4$ unmatched.  

No new students apply to $\beta$, so $C_{\beta}(S^\beta_2)=\{s_1,s_5,s_7\}$.\smallskip  

\noindent \textbf{Round 3:}  
No new students apply to $\alpha$, so $C_{\alpha}(S^\alpha_3)=\{s_2,s_3,s_6\}$.  

Student $s_4$ applies to $\beta$, so $S_3^\beta=\{s_1,s_4,s_5,s_7\}$. Now, $\xi(S_3^\beta)=(2,2)$ with $F_{r_\beta}(S_3^\beta)=\{(1,2)\}$. Thus, $\beta$ admits $C_{\beta}(S^\beta_3)=\{s_1,s_5,s_7\}$. Student $s_4$ remains unmatched, and the algorithm terminates.  

The final outcome assigns $\{s_2,s_3,s_6\}$ to $\alpha$, $\{s_1,s_5,s_7\}$ to $\beta$, and leaves $s_4$ unmatched.\medskip

\noindent \textbf{Stability:}  
This outcome is stable: students $s_1$, $s_2$, $s_5$, $s_6$, and $s_7$ are matched to their most preferred schools, so any potential blocking pair must involve $s_3$ or $s_4$.  

Student $s_3$ cannot block with $\beta$. First, $s_3$ cannot displace student $s_1$, who also has a blue type and a higher priority. Second, while $s_3$ has a higher priority than both $s_5$ and $s_7$, $s_3$ also cannot displace them since they both have red type, and school $\beta$'s target distribution $r_\beta$ favors red types. By a similar logic, $s_4$ cannot form a blocking pair with school $\beta$.

Furthermore, student $s_4$ cannot form a blocking pair with school $\alpha$. First, student $s_4$ cannot displace students $s_2$ and $s_3$ from the school; the latter two students are also blue types but have a higher priority.  While $s_4$ has a higher priority than $s_6$, $s_4$ also cannot displace the sole type-red student admitted to $\alpha$ because the school favors having an equal proportion of type-red and type-blue students.

Finally, no matched student wants to be unmatched, and no matched school wants to have an empty seat. Hence, the outcome is stable.
\end{example}

\section{Conclusion}
We have introduced a novel framework for thinking about diversity and representation using 
generalized majorization. This approach leads to a new choice rule, the $r$-targeting Schur 
choice rule, which allows institutions to target any desired population ratio $r$ as the 
ideal distribution. The rule is particularly well-suited to contexts where representation 
and diversity are important, offering advantages over conventional choice rules such as those with reserves or quotas. Majorization provides a flexible and symmetric way to measure deviations from a target distribution, treating under- and over-representation equally.

We have justified the $r$-targeting Schur choice rule in scenarios where the ratio $r$ is given, and the priority ranking is derived from the choice rule using revealed preferences. Additionally, it is easy to show that a similar characterization is achievable when the priority ranking is predetermined. This can be done by replacing the revealed-preference axiom with an axiom that ensures compatibility between the revealed preference and the given priority ranking. Furthermore, leveraging this characterization, we can also extend our findings to characterize the student-proposing deferred-acceptance mechanism in which schools employ Schur choice rules.\footnote{A characterization of the deferred acceptance mechanism, as opposed to a choice rule (the focus of our paper), would involve methods recently introduced by \cite{Dogan/Imamura/Yenmez:2025}.}

In conclusion, our work contributes to the ongoing discourse on market design by providing a robust framework for improving diversity and representation through the $r$-targeting Schur choice rule. Future research could explore the empirical implications of our findings and investigate additional applications across different institutional settings.

\newpage
\appendix
\section{Appendix}

\subsection{On the \texorpdfstring{$r$}{}-targeting Schur frontier}\label{sec:frontier}
We provide several results that characterize and detail the properties of the $r$-targeting Schur frontier. These results are important in the proofs of our main results, and may be of independent interest because the frontier captures the maximally $r$-diverse distributions that a school is able to achieve.

\begin{lemma}\label{lem:frontier-bias-norm}
Let $x \in \mathbb Z_+^n$. For all $y \in F_r(x)$, $\norm{y}=\min\{q, \norm{x}\}$.
\end{lemma}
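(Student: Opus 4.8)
The plan is to prove two inequalities, only one of which actually uses the frontier structure. The upper bound $\norm{y}\leq\min\{q,\norm{x}\}$ holds for \emph{every} element of the budget set: any $y\in B(x)$ satisfies $\norm{y}\leq q$ directly from the definition of $B(x)$, and $y\leq x$ gives $\norm{y}=\sum_{i=1}^n y_i\leq\sum_{i=1}^n x_i=\norm{x}$, so $\norm{y}\leq\min\{q,\norm{x}\}$. The real content of the lemma is the reverse inequality, which is where membership in $F_b(x)$ enters.

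For the lower bound I would argue by contraposition: assume $y\in B(x)$ with $\norm{y}<\min\{q,\norm{x}\}$ and construct a vector $z\in B(x)$ with $z>y$, so that the coordinatewise-dominance clause in the definition of $F_b(x)$ is violated and hence $y\notin F_b(x)$. The construction simply adds one student of an available type. Since $\norm{y}<\norm{x}$ while $y\leq x$, not all coordinates can agree, so there is an index $i$ with $y_i<x_i$; set $z\coloneqq y+\chi_i$. Then $z>y$ by construction, and I must verify $z\in B(x)$. Integrality does the work: $y_i<x_i$ forces $z_i=y_i+1\leq x_i$ (while $z_j=y_j\leq x_j$ for $j\neq i$), so $z\leq x$; and $\norm{y}<q$ with both integers forces $\norm{z}=\norm{y}+1\leq q$. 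Thus $z\in B(x)$ and $z>y$, contradicting $y\in F_b(x)$. Combining the two bounds gives $\norm{y}=\min\{q,\norm{x}\}$ for every $y\in F_b(x)$.

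There is no substantial obstacle in this lemma; it is essentially a warm-up that records the feasibility geometry of the frontier. The only point requiring care is the integer arithmetic that converts the strict inequalities $y_i<x_i$ and $\norm{y}<q$ into the ``$+1$'' feasibility checks $z_i\leq x_i$ and $\norm{z}\leq q$. I would also stress what the proof reveals conceptually: the diversity clause $z\mathrel{\rhd_b}y$ in the definition of $F_b(x)$ is never invoked here. Maximality with respect to the coordinatewise order $>$ \emph{alone} already forces the admitted student body to be as large as the capacity $q$ and the applicant pool $x$ jointly permit, so diversity considerations can only select \emph{among} distributions of this maximal size.
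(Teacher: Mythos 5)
Your proof is correct and rests on the same core idea as the paper's: contradict membership in $F_b(x)$ by adding one student in a coordinate with slack, i.e.\ exhibiting $y+\chi_i \in B(x)$ with $y+\chi_i > y$. The only (minor) difference is bookkeeping—the paper locates the slack coordinate by comparing $y$ to a hypothetical $z\in B(x)$ with $\norm{z}>\norm{y}$ and inherits feasibility from $z$, whereas you compare $y$ directly to $x$ and use integrality, which also makes explicit the trivial upper bound that the paper leaves implicit in the final step $\max_{z\in B(x)}\norm{z}=\min\{q,\norm{x}\}$.
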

\begin{proof}[Proof of \autoref{lem:frontier-bias-norm}]
Consider any $y\in F_r(x)$ and, for the sake of a contradiction, suppose there exists $z\in B(x)$ such that  $\norm{z}> \norm{y}$. Equivalently,  $\norm{z}\geq \norm{y}+1$. Then, there exists an $i\in\mathcal{N}$ such that $z_i\geq y_i+1$. Define a new distribution $\tilde y=y+\chi_i$. By construction, $\tilde y_i\leq z_i\leq x_i$ and  $\tilde y_j=y_j\leq x_j$ for all $j\neq i$, so $\tilde y\leq x$. Additionally, $\norm{\tilde y}\leq \norm{z}\leq q$. Hence, $\tilde y\in B(x)$ and $\tilde y>y$, which would contradict the fact that $y\in F_r(x)$. Consequently, $\norm{y}=\max_{z\in B(x)}\norm{z}=\min\{q, \norm{x}\}$.
\end{proof}\medskip

\begin{lemma}\label{lem:frontier-bias-permute}
Let $x \in \mathbb Z_+^n$ and suppose $|F_r(x)|>1$. Then there exists a constant $a\in \mathbb{R}$ such that for any $y\in \mathbb Z_+^n$, the following are equivalent:
\begin{enumerate}[label={$(\roman*)$}]
\item  $y\in F_r(x)$.
\item For each distribution $z\in F_r(x)$ with $z\neq y$, there exists an involutory permutation $\pi:\mathcal{N}\to\mathcal{N}$ such that for all $i\in \mathcal{N}$: \footnote{A permutation $\pi:\mathcal{N} \to \mathcal{N}$ is an \textit{involution} if $\pi(\pi(i))=i$ for all $i\in \mathcal{N}$.}
\begin{enumerate}[label={$\alph*.$}]
\item $T^r_i(y)=T^r_{\pi(i)}(z)$.
\item $T^r_i(y)>T^r_i(z)$ implies  $T^r_i(y)=T^r_i(z)+1=a$.
\item $T^r_i(y)<T^r_i(z)$ implies $T^r_i(y)+1=T^r_i(z)=a$.
\end{enumerate}
\end{enumerate}
\end{lemma}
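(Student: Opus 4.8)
The plan is to work throughout with the transformed vectors and to reduce the lemma to a statement about the majorization-minimal elements of the transformed budget set. By \autoref{lem:frontier-bias-norm}, every $y\in F_b(x)$ has the same mass $m\coloneqq\min\{q,\norm{x}\}$, and since $\norm{b}=0$ (both $\mathbf{1/n}$ and $r$ are distributions) the map $T=T_b$ preserves mass, so $\norm{T(y)}=\norm{y}=m$ and, for any two frontier points, $T(y)-T(z)=(y-z)\in\mathbb{Z}^n$ has coordinates summing to zero. The engine of the whole argument is a single \emph{no beneficial transfer} fact: a unit transfer that moves mass from a coordinate with $T$-value $u$ to one with $T$-value $v$ is a Robin Hood transfer that strictly lowers the transformed vector in the majorization order (hence strictly increases $b$-diversity) precisely when $u-v>1$. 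Consequently, if $y\in F_b(x)$, $y_i\ge 1$ and $y_j<x_j$, then $y-\chi_i+\chi_j\in B(x)$, so we cannot have $T(y)_i-T(y)_j>1$; otherwise this feasible point would be strictly more $b$-diverse, contradicting $y\in F_b(x)$.

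First I would prove the pairwise version of (i)$\Rightarrow$(ii). Fix distinct $y,z\in F_b(x)$ and put $P=\{i:y_i>z_i\}$, $N=\{i:y_i<z_i\}$, both nonempty because $\norm{y}=\norm{z}$. For $i\in P$ and $j\in N$ the transfer $i\to j$ is admissible for $y$ and $j\to i$ is admissible for $z$, so the no-transfer fact gives $T(y)_i-T(y)_j\le 1$ and $T(z)_j-T(z)_i\le 1$. Adding these, the bias terms cancel and the left-hand side collapses to $(y_i-z_i)+(z_j-y_j)\le 2$; since both summands are positive integers, every inequality is tight. This forces $y_i-z_i=1$ on $P$, $z_j-y_j=1$ on $N$, and $T(y)_i-T(y)_j=1$ for all $i\in P$, $j\in N$. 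Hence $T(y)$ is constant $=a_{yz}$ on $P$ and $=a_{yz}-1$ on $N$, while $T(z)$ is $a_{yz}-1$ on $P$ and $a_{yz}$ on $N$. As $\abs{P}=\abs{N}$, any bijection $P\to N$ extended by the identity is an involution $\pi$ satisfying (a)–(c) with value $a_{yz}$. In particular $T(y)$ and $T(z)$ are rearrangements of one another, so all frontier points share one sorted profile $\sigma$.

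The main obstacle is to promote the pair-dependent levels $a_{yz}$ to a single constant $a$. Suppose two swap levels $a_1>a_2$ occur. Each such value lies in the common multiset $\sigma$, so \emph{every} frontier point carries some coordinate valued $a_1$ and some coordinate valued $a_2-1$; and if in one frontier point the $a_1$-coordinate were decreasable ($y_i\ge1$) and the $(a_2-1)$-coordinate increasable ($y_j<x_j$), the transfer $i\to j$ would be admissible with gap $a_1-(a_2-1)>1$, a contradiction. The hard part is purely one of co-location: a decreasable top coordinate is guaranteed only in the point witnessing the $a_1$-swap, an increasable bottom coordinate only in the point witnessing the $a_2$-swap, and the box constraints can pin the opposite extreme to a face (floor or cap) in each point. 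I expect this to be the crux, and I would resolve it via the matroid structure of \autoref{lem:base}: since the frontier points are exactly the bases of a matroid, they are joined by single-element exchanges, which lets me transport an increasable bottom coordinate into a base that still retains a decreasable top coordinate (equivalently, take an extremal frontier point minimizing the bottom coordinate's mass) and then exhibit the forbidden transfer by a short case check. Once $a_{yz}\equiv a$ is established, (a)–(c) hold with this common $a$ for all $z$, which is (i)$\Rightarrow$(ii).

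For the converse, suppose $y\le x$ satisfies (ii). Choosing any $z\in F_b(x)$ with $z\neq y$ (possible since $\abs{F_b(x)}>1$), condition (a) shows $T(y)$ is a rearrangement of $T(z)$, so $\norm{y}=m$ and $\mathrm{sort}(T(y))=\sigma$. The forward direction identifies $F_b(x)$ with exactly the feasible points attaining the majorization-minimal profile $\sigma$; and $\norm{y}=m=\min\{q,\norm{x}\}$ rules out any $z'\in B(x)$ with $z'>y$. Therefore $y\in F_b(x)$. Note that feasibility $y\in B(x)$ is used precisely here: conditions (a)–(c) only constrain $T(y)$ as a rearrangement of $\sigma$, so the requirement $y\le x$ is what excludes the spurious rearrangements of $\sigma$ that exceed the box and are not on the frontier.
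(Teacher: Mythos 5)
Your pairwise analysis is correct and in fact cleaner than the paper's: adding the two no-transfer inequalities so that the bias terms cancel compresses the paper's Claims 1--3 (which go through floor functions and the integrality of $\norm{y}(b_i-b_j)$) into a few lines, and the resulting swap structure at level $a_{yz}$ matches the paper's Claim 4. The genuine gap is exactly at the step you yourself flag as the crux, and your proposed fix does not work: you cannot invoke \autoref{lem:base} here, because in the paper \autoref{lem:base} is proved from \autoref{lem:counterfactual}, whose proof in turn rests on \autoref{lem:frontier-bias-permute} --- the very lemma you are proving. Resolving the pair-independence of the constant via the matroid basis-exchange property is therefore circular.

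The co-location difficulty you describe does not actually arise, because it suffices to compare pairs that share a common point. Suppose $a_{yz}>a_{yw}$ for frontier points $y,z,w$. Take $i\in A_{yz}\coloneqq\{i: y_i>z_i\}$ and $j\in B_{yw}\coloneqq\{j: y_j<w_j\}$: both are coordinates of the \emph{same} point $y$, with $y_i>z_i\geq 0$ (so $i$ is decreasable in $y$) and $y_j<w_j\leq x_j$ (so $j$ is increasable in $y$), while $T(y)_i-T(y)_j=a_{yz}-(a_{yw}-1)>1$. Your own no-beneficial-transfer fact then makes $y-\chi_i+\chi_j$ a feasible strict improvement, contradicting $y\in F_b(x)$; the symmetric argument rules out $a_{yz}<a_{yw}$, so $a_{yz}=a_{yw}$, and equality for arbitrary pairs follows by chaining through a common point ($a_{zy}=a_{yw}=a_{wv}$). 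This short argument is what should replace your appeal to \autoref{lem:base}. Two smaller remarks: the paper itself is terse here, asserting pair-independence in a single sentence, so identifying the issue was the right instinct; and in your converse direction you should spell out why no $w\in B(x)$ satisfies $w\rhd_b y$ --- if $T(y)\succ_s T(w)$, then $T(z)\succ_s T(w)$ by permutation invariance of majorization, contradicting $z\in F_b(x)$ --- since conditions (a)--(c) alone only make $T(y)$ a rearrangement of $T(z)$.
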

\begin{proof}[Proof of \autoref{lem:frontier-bias-permute}]
Fix any distribution $x\in\mathbb{Z}_+^n$ such that $|F_r(x)|>1$, and let $y\in F_r(x)$. Pick any  distribution $z\in F_r(x)$ with $y\neq z$. Since $\norm{y}=\norm{z}$ by \autoref{lem:frontier-bias-norm}, there exist types $i,j \in\mathcal{N}$ with $i\neq j$ such that $y_i>z_i$ and $y_j<z_j$. Define the nonempty and disjoint sets $\mathbf A\coloneqq \{i\in\mathcal{N}: y_i>z_i\}$ and $\mathbf B\coloneqq \{i\in\mathcal{N}: y_i<z_i\}$.  

\begin{claim}\label{claim:1}
For all $i\in \mathbf A$ and  $j\in \mathbf B$, $y_i = y_j+1+\lfloor \norm{y}(r_i-r_j)\rfloor$ and  $z_j = z_i+1+\lfloor \norm{y}(r_j-r_i)\rfloor$.\end{claim}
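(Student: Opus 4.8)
The plan is to reduce everything to one elementary fact about majorization under a single unit transfer, and then to squeeze the two displayed equalities between a lower bound (coming from $i\in A$, $j\in B$) and an upper bound (coming from frontier-maximality of $y$ and $z$). The affine structure $T_b(w)=w+\norm{w}b$ together with \autoref{lem:frontier-bias-norm} (which gives $\norm{y}=\norm{z}$) is what lets me pass freely between transfers on the raw distributions and transfers on their $T_b$-images.

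First I would record the transfer lemma: for a real vector $u\in\mathbb{R}^n$ and the vector $v=u-\chi_i+\chi_j$ obtained by moving one unit from coordinate $i$ to coordinate $j$ (so $\norm{u}=\norm{v}$), one has $u\succ_s v$ if and only if $u_i-u_j>1$; if $u_i-u_j=1$ the two vectors are permutations of one another, and if $u_i-u_j<1$ then $v\succ_s u$. This is just the one-step Robin-Hood (T-transform) characterization specialized to an integer unit of transfer, and it is the conceptual crux of the argument. It is also the one place where I must be careful that $T_b(\cdot)$ is a genuinely real (non-integer) vector, so that the relevant boundary is the strict inequality ``$>1$'' rather than ``$\geq 2$''.

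Next I would check feasibility of the two natural transfers and invoke frontier-maximality. Since $T_b(y-\chi_i+\chi_j)=T_b(y)-\chi_i+\chi_j$, the distribution $y'\coloneqq y-\chi_i+\chi_j$ performs a unit transfer on $T_b(y)$, and it lies in $B(x)$ because $y_j+1\leq z_j\leq x_j$ and $y_i-1\geq z_i\geq 0$ while $\norm{y'}=\norm{y}$. As $y\in F_b(x)$ forbids $y'\rhd_b y$, the transfer lemma gives $T_b(y)_i-T_b(y)_j\leq 1$; since $y_i-y_j\in\mathbb{Z}$ this rewrites as $y_i-y_j\leq 1+\lfloor\norm{y}(b_j-b_i)\rfloor$. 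Running the symmetric transfer $z'\coloneqq z-\chi_j+\chi_i\in B(x)$ and using $z\in F_b(x)$ gives, in the same way, $z_j-z_i\leq 1+\lfloor\norm{y}(b_i-b_j)\rfloor$.

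Finally I would close the gap with a floor-squeeze. From $i\in A$, $j\in B$ we have the integer strict inequalities $y_i\geq z_i+1$ and $z_j\geq y_j+1$, whence $(y_i-y_j)+(z_j-z_i)=(y_i-z_i)+(z_j-y_j)\geq 2$. On the other hand, summing the two upper bounds and using the identity $\lfloor t\rfloor+\lfloor -t\rfloor\in\{-1,0\}$ with $t=\norm{y}(b_j-b_i)$ gives $(y_i-y_j)+(z_j-z_i)\leq 2+\lfloor\norm{y}(b_j-b_i)\rfloor+\lfloor\norm{y}(b_i-b_j)\rfloor\leq 2$. Hence both bounds are tight: the two floor terms sum to $0$ and each upper bound holds with equality, which is precisely $y_i=y_j+1+\lfloor\norm{y}(b_j-b_i)\rfloor$ and $z_j=z_i+1+\lfloor\norm{y}(b_i-b_j)\rfloor$. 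The main obstacle is pinning down the transfer lemma correctly at its non-integer boundary; once that is in place, the remainder is the floor-squeeze just described.
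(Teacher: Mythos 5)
Your proof is correct, and its first half coincides with the paper's own argument: there too, the upper bounds $y_i-y_j\leq 1+\lfloor\norm{y}(b_j-b_i)\rfloor$ and $z_j-z_i\leq 1+\lfloor\norm{y}(b_i-b_j)\rfloor$ are obtained by noting that the unit transfers $y-\chi_i+\chi_j$ and $z-\chi_j+\chi_i$ stay in $B(x)$, commute with $T_b$ (the sum is unchanged), and would be strict $b$-diversity improvements if the relevant gap in $T$-values exceeded $1$, contradicting membership in $F_b(x)$; integrality of the raw coordinates then converts the real bound into the floor bound, exactly as you do. Where you genuinely differ is the finish. The paper closes asymmetrically: it proves the strict positivity $T(y)_i-T(y)_j>0$ by contradiction --- if $T(y)_{i}\leq T(y)_{j}$, the definitions of $A$ and $B$ yield the chain $T(z)_{j}>T(y)_{j}\geq T(y)_{i}>T(z)_{i}$, hence $T(z)_{j}-T(z)_{i}\geq 2$, violating the upper bound already proved for $z$ --- and then rounds this to the integer lower bound $y_i-y_j\geq 1+\lfloor\norm{y}(b_j-b_i)\rfloor$, so that upper and lower bounds meet. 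You instead sum the two upper bounds and squeeze against $(y_i-z_i)+(z_j-y_j)\geq 2$, using the identity $\lfloor t\rfloor+\lfloor -t\rfloor\in\{-1,0\}$; as you correctly note, any slack in one bound would violate the other, so both hold with equality. Both finishes are sound, but yours buys something extra: tightness forces $\lfloor t\rfloor+\lfloor -t\rfloor=0$, i.e.\ $\norm{y}(b_i-b_j)\in\mathbb{Z}$, which is precisely \autoref{claim:2} of the paper, there proved afterwards as a separate consequence of \autoref{claim:1}. So your floor-squeeze delivers Claims \ref{claim:1} and \ref{claim:2} in a single symmetric stroke, whereas the paper's route isolates the intermediate qualitative fact that $T(y)$ is strictly larger on $A$ than on $B$ before rounding.
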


\begin{proof}[Proof of \autoref{claim:1}]
We first show that for all $i\in \mathbf A$ and $j\in \mathbf B$,  $T^r_i(y)-T^r_j(y)\leq 1$. To see why, suppose $T^r_{i^*}(y)-T^r_{j^*}(y)>1$ for some $i^*\in \mathbf A$ and $j^*\in \mathbf B$. Notice that the distribution $y+\chi_{j^*}-\chi_{i^*}\in B(x)$ and $T^r(y+\chi_{j^*}-\chi_{i^*})=T^r(y)+\chi_{j^*}-\chi_{i^*}$. Thus, $T^r(y)\succ_s T^r(y+\chi_{j^*}-\chi_{i^*})$, or equivalently, $y+\chi_{j^*}-\chi_{i^*} \mathrel{\rhd_r} y$. We  would then conclude that $y\notin F_r(x)$, yielding a contradiction. Hence,  for all $i\in \mathbf A$ and $j\in \mathbf B$, $T^r_i(y)-T^r_j(y)\leq 1$, or equivalently, $y_i-y_j\leq 1+\norm{y}(r_i-r_j)$. Since $y_i-y_j\in \mathbb{Z}^n$, we can conclude that $y_i - y_j\leq 1+\lfloor \norm{y}(r_i-r_j)\rfloor$. By a symmetric argument, we have $T^r_j(z)-T^r_i(z)\leq 1$, or equivalently, $z_j - z_i\leq 1+\lfloor \norm{y}(r_j-r_i)\rfloor$.

We next show that for any $i\in \mathbf A$ and $j\in \mathbf B$,  $T^r_i(y)-T^r_j(y)>0$. Again, suppose there exists $i^*\in \mathbf A$ and $j^*\in \mathbf B$ such that  $T^r_{i^*}(y)-T^r_{j^*}(y)\leq 0$. By definition of the sets $\mathbf A$ and $\mathbf B$, we would then have 
\[
T^r_{j^*}(z)>T^r_{j^*}(y)\geq T^r_{i^*}(y)>T^r_{i^*}(z),
\]
which would imply $T^r_{j^*}(z)-T^r_{i^*}(z)\geq 2$, leading to a contradiction of the already proven claim that $T^r_{j^*}(z)-T^r_{i^*}(z)\leq 1$. Thus, for all $i\in \mathbf A$ and $j\in\mathbf B$,  $T^r_i(y)-T^r_j(y)>0$, or equivalently, $y_i-y_j>\norm{y}(r_i-r_j)$. Since $y_i-y_j\in \mathbb{Z}^n$, we can conclude that $y_i-y_j\geq 1+\lfloor \norm{y}(r_i-r_j)\rfloor$. By a symmetric argument, we have $T^r_j(z)-T^r_i(z)>0$, or equivalently, $z_j - z_i\geq 1+\lfloor \norm{y}(r_j-r_i)\rfloor$. This concludes the proof to \autoref{claim:1}. \end{proof}

\begin{claim}\label{claim:2} For all $i\in\mathbf A$ and $j\in\mathbf B$, $\norm{y}(r_i-r_j)\in\mathbb{Z}$.
\end{claim}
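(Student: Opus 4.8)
The plan is to combine the two identities furnished by Claim~\ref{claim:1} \emph{additively} rather than to treat them in isolation. Fix $i\in A$ and $j\in B$, and set $t\coloneqq \norm{y}(b_i-b_j)$, so that $\norm{y}(b_j-b_i)=-t$. Claim~\ref{claim:1} then reads $y_i-y_j=1+\lfloor -t\rfloor$ and $z_j-z_i=1+\lfloor t\rfloor$.

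Next I would add these two equations. On the left, the terms regroup as $(y_i-y_j)+(z_j-z_i)=(y_i-z_i)+(z_j-y_j)$. Because $i\in A$ gives $y_i>z_i$ and $j\in B$ gives $z_j>y_j$, and all coordinates are integers, each of the two summands is at least $1$; hence the left-hand side is at least $2$. On the right, the sum equals $2+\bigl(\lfloor t\rfloor+\lfloor -t\rfloor\bigr)$.

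Finally I would invoke the elementary floor identity for $\lfloor t\rfloor+\lfloor -t\rfloor$, which equals $0$ when $t\in\mathbb{Z}$ and $-1$ otherwise. Thus the right-hand side is either $2$ or $1$. Since it must equal the left-hand side, which is at least $2$, the only possibility is $\lfloor t\rfloor+\lfloor -t\rfloor=0$, i.e.\ $t=\norm{y}(b_i-b_j)\in\mathbb{Z}$, which is exactly the claim (and, as a by-product, forces $y_i-z_i=z_j-y_j=1$).

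The argument is short, and the only subtlety—really the key step—is to add the two Claim~\ref{claim:1} identities rather than subtract them: addition is what pairs the off-diagonal differences $y_i-y_j$ and $z_j-z_i$ back into the ``diagonal'' differences $y_i-z_i$ and $z_j-y_j$, whose integrality and strict positivity (from membership in $A$ and $B$) supply the lower bound that the floor-sum identity must meet. I do not anticipate any genuine obstacle beyond spotting this regrouping.
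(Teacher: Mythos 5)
Your proposal is correct and is essentially the paper's own argument in a slightly different dress: the paper substitutes one Claim~\ref{claim:1} identity into the other to get $y_i-z_i=(y_j-z_j)+2+\lfloor \norm{y}(b_j-b_i)\rfloor+\lfloor \norm{y}(b_i-b_j)\rfloor$, which is exactly your added-and-regrouped equation, and then rules out the non-integer case by the same floor-sum dichotomy combined with the sign and integrality constraints from $i\in A$, $j\in B$. Your by-product $y_i-z_i=z_j-y_j=1$ is likewise what the paper records separately as Claim~\ref{claim:3}.
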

\begin{proof}[Proof of \autoref{claim:2}]
From \autoref{claim:1}, for all $i\in\mathbf A$ and  $j\in \mathbf B$,  
\begin{align*}
0&<y_i-z_i\\[6pt]
&=y_j-z_j+2+\lfloor \norm{y}(r_i-r_j)\rfloor+\lfloor \norm{y}(r_j-r_i)\rfloor\\[6pt]
&=\left\{\begin{array}{ccc}
y_j-z_j+2 &\mbox{if} & \norm{y}(r_i-r_j)\in\mathbb{Z}\\
y_j-z_j+1 &\mbox{if} & \norm{y}(r_i-r_j)\notin\mathbb{Z}
\end{array} \right..
\end{align*}

Notice that if $\norm{y}(r_i-r_j)\notin\mathbb{Z}$ for some $i\in\mathbf A$ and $j\in\mathbf B$, then $y_j-z_j<0$ and $y_j-z_j+1>0$, which is not feasible as $y_j-z_j\in \mathbb{Z}$. Hence, for all $i\in \mathbf A$ and $j\in\mathbf B$, $\norm{y}(r_i-r_j)\in\mathbb{Z}$.
\end{proof}
\begin{claim}\label{claim:3} For all $i\in\mathbf A$, $T^r_i(y)=T^r_i(z)+1$.  Similarly,  for all $j\in\mathbf B$,  $T^r_j(z)=T^r_j(y)+1$. Consequently, $|\mathbf A|=|\mathbf B|$. 
\end{claim}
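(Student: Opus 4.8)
The plan is to extract an exact equality from the computation already carried out in the proof of \autoref{claim:2}, and then pin down the differences $y_i-z_i$ and $y_j-z_j$ using the sign constraints defining $A$ and $B$. By \autoref{claim:2}, for all $i\in A$ and $j\in B$ we have $\norm{y}(b_i-b_j)\in\mathbb{Z}$; hence the two floor terms in the displayed chain of equalities in that proof are exact integers that are negatives of one another and therefore cancel. This collapses the chain to the exact relation $y_i-z_i=(y_j-z_j)+2$ for every pair $i\in A$, $j\in B$. Since $\norm{y}=\norm{z}$ by \autoref{lem:frontier-bias-norm}, writing $d_k\coloneqq y_k-z_k$ we also have $d_k=T(y)_k-T(z)_k$, so it suffices to determine the integers $d_k$.

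Next I would argue that this pairwise relation forces $d$ to be constant on each of $A$ and $B$. Fixing any single $j\in B$, the relation gives $d_i=d_j+2$ for every $i\in A$, so $d$ takes a common value on $A$; symmetrically $d$ takes a common value on $B$, and the two constants differ by exactly $2$. Now combine this with the defining constraints: for $i\in A$ we have $d_i\geq 1$ and for $j\in B$ we have $d_j\leq -1$ (both are strict integer inequalities). A constant value $d_i\geq 1$ on $A$ together with $d_j=d_i-2\leq -1$ on $B$ leaves only the possibility $d_i=1$ for all $i\in A$ and $d_j=-1$ for all $j\in B$. Translated back, this is precisely $T(y)_i=T(z)_i+1$ for $i\in A$ and $T(z)_j=T(y)_j+1$ for $j\in B$, the first two assertions of the claim.

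For the final assertion $|A|=|B|$, I would sum the differences over all coordinates. Coordinates outside $A\cup B$ satisfy $y_k=z_k$, i.e.\ $d_k=0$, so $\sum_{k=1}^n d_k=\sum_{i\in A}1+\sum_{j\in B}(-1)=|A|-|B|$. On the other hand $\sum_{k=1}^n d_k=\norm{y}-\norm{z}=0$ by \autoref{lem:frontier-bias-norm}, whence $|A|=|B|$. The only genuinely substantive step here is the passage from the pairwise relation $d_i-d_j=2$ to the exact values $d_i=1$, $d_j=-1$; everything else is bookkeeping. The key observation is that this relation must hold \emph{simultaneously} for all pairs, which forces $d$ to be constant on each block, and the integrality together with the strict sign conditions then leaves no remaining slack.
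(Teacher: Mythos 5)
Your proof is correct and follows essentially the same route as the paper: both extract the exact relation $y_i-z_i=y_j-z_j+2$ from \autoref{claim:2} (the floor terms cancel by integrality), pin down $y_i-z_i=1$ and $z_j-y_j=1$ using integrality together with the sign constraints defining $A$ and $B$, and then obtain $|A|=|B|$ by summing the coordinate differences and using $\norm{y}=\norm{z}$. Your intermediate step establishing that the differences are constant on $A$ and on $B$ is harmless but unnecessary, since the pairwise relation plus the sign constraints already determines each difference directly, which is how the paper proceeds.
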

\begin{proof}[Proof of \autoref{claim:3}]
From \autoref{claim:2}, we can conclude that for all $i\in\mathbf A$ and $j\in\mathbf B$, 
\[
0<y_i-z_i=y_j-z_j+2.
\]
Therefore,  we have $0<z_j-y_j<2$ for all $j\in\mathbf B$, which implies that $z_j-y_j=1$ and $y_i-z_i=1$. We can then conclude that $T^r_i(y)=T^r_i(z)+1$ for all $i\in\mathbf A$ and $T^r_j(z)=T^r_j(y)+1$ for all $j\in\mathbf B$.

Additionally, since $\norm{y}=\norm{z}$, we have 
\begin{align*}
0&=\norm{y}-\norm{z}\\[6pt]
&=\sum_{i\notin\mathbf A\cup\mathbf B} \underbrace{(y_i-z_i)}_{=0} +\sum_{i\in\mathbf A} \underbrace{(y_i-z_i)}_{=1} +\sum_{j\in\mathbf B} \underbrace{(y_j-z_j)}_{=-1}\\[6pt]
&=|A|-|B|,
\end{align*}
giving us the desired conclusion.\end{proof}

\begin{claim}\label{claim:4} There exists a constant  $a\in \mathbb{R}$ such that $T^r_i(y)=T^r_j(z)=a$ for all $i\in\mathbf A$ and $j\in\mathbf B$.  \end{claim}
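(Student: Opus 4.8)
The plan is to sharpen the inequalities already available from the proof of \autoref{claim:1} into exact equalities, and then read off the common value $a$ directly. Recall two ingredients. First, the proof of \autoref{claim:1} established that for all $i\in A$ and $j\in B$ we have $0<T(y)_i-T(y)_j\leq 1$ and $0<T(z)_j-T(z)_i\leq 1$. Second, \autoref{claim:3} gives $T(y)_i=T(z)_i+1$ for every $i\in A$ and $T(z)_j=T(y)_j+1$ for every $j\in B$.

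The key step is a one-line cancellation. Fix $i\in A$ and $j\in B$. Using $T(y)_j=T(z)_j-1$ and $T(z)_i=T(y)_i-1$ from \autoref{claim:3}, I would rewrite the two gaps as $T(y)_i-T(y)_j=T(y)_i-T(z)_j+1$ and $T(z)_j-T(z)_i=T(z)_j-T(y)_i+1$. Adding these, the terms $T(y)_i$ and $T(z)_j$ cancel and the sum is exactly $2$. Since each summand is at most $1$ and both are positive, each must equal $1$. Hence $T(y)_i-T(y)_j=1$ and $T(z)_j-T(z)_i=1$ for every $i\in A$ and $j\in B$.

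It remains to extract the constant. Fixing any $j\in B$, the relation $T(y)_i=T(y)_j+1$ holds for all $i\in A$, so $T(y)_i$ is independent of $i$; call this common value $a$. Fixing any $i\in A$ then forces $T(y)_j=a-1$ for all $j\in B$, and \autoref{claim:3} gives $T(z)_j=T(y)_j+1=a$ for all $j\in B$. Therefore $T(y)_i=a=T(z)_j$ for every $i\in A$ and $j\in B$, which is exactly the claim.

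I expect the only delicate point to be organizational rather than computational: the claim asserts a single constant $a$ that simultaneously governs $T(y)$ on $A$ and $T(z)$ on $B$, and one must check that the two ``plateaus'' land at the same height. This alignment is precisely what the ``sum equals $2$ with each term $\leq 1$'' argument guarantees, since it pins both off-diagonal gaps to unit size; no further estimation is needed.
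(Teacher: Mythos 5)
Your proof is correct, and it takes a genuinely different route from the paper's. The paper proves \autoref{claim:4} in two separate steps: it first obtains the cross-equality $T(y)_i=T(z)_j$ by chaining \autoref{claim:1} and \autoref{claim:3} and then invoking the integrality of $\norm{y}(b_j-b_i)$ from \autoref{claim:2} to make the residual term $\lfloor\norm{y}(b_j-b_i)\rfloor-\norm{y}(b_j-b_i)$ vanish; it then proves constancy of $T(y)_i$ across $A$ by a separate contradiction argument, again through floor-function manipulations resting on \autoref{claim:2}. Your argument instead uses the two-sided bounds $0<T(y)_i-T(y)_j\leq 1$ and $0<T(z)_j-T(z)_i\leq 1$ (valid either from the proof of \autoref{claim:1} or directly from its statement, since $1+\lfloor t\rfloor-t\in(0,1]$ for any real $t$) together with \autoref{claim:3} to get the telescoping identity
\[
\bigl(T(y)_i-T(y)_j\bigr)+\bigl(T(z)_j-T(z)_i\bigr)=\bigl(T(y)_i-T(z)_j+1\bigr)+\bigl(T(z)_j-T(y)_i+1\bigr)=2,
\]
which, since each summand is at most $1$, pins both gaps to exactly $1$ and delivers the cross-equality and the constancy on $A$ and on $B$ in a single stroke. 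What your route buys: no direct appeal to \autoref{claim:2} and no proof by contradiction (integrality is still in the logical chain, but only because \autoref{claim:3} relies on it, not because your step does). What the paper's route buys: it makes explicit exactly where the integrality of $\norm{y}(b_i-b_j)$ does the work. One bookkeeping remark: your constant $a$ is a priori specific to the fixed pair $(y,z)$, but that is all \autoref{claim:4} asserts; the independence of $a$ from the choice of pair is handled by the paper only afterwards, in the wrap-up of \autoref{lem:frontier-bias-permute}, so your local statement is exactly what is needed.
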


\begin{proof}[Proof of \autoref{claim:4}] We first show that for all  $i\in\mathbf A$ and $j\in\mathbf B$, $T^r_i(y)=T^r_j(z)$. To see why, note that for all $i\in\mathbf A$ and $j\in\mathbf B$, 
\begin{align*}
T^r_i(y)&=T^r_j(y)+1+\lfloor\norm{y}(r_i-r_j)\rfloor-\norm{y}(r_i-r_j)\\[6pt]
&=T^r_j(z)+\lfloor\norm{y}(r_i-r_j)\rfloor-\norm{y}(r_i-r_j)\\[6pt]
&=T^r_j(z),
\end{align*}
where the first equality comes from \autoref{claim:1}, the second equality from \autoref{claim:3}, and the last equality from \autoref{claim:2}.

We then show that for all $i\in\mathbf A$,  $T(y)_i=a$ for some $a\in \mathbb{R}$. Such a statement is vacuously true if $|A|=1$. Let $|A|>1$. Suppose there exist $i, k\in\mathbf A$ with $i\neq k$ such that $T^r_i(y)>T^r_k(y)$, which implies that for any $j\in\mathbf B$, 
\begin{align*}
y_i&\geq y_k+1+\lfloor \norm{y}(r_i-r_k) \rfloor \\[6pt]
&=y_j+2+\lfloor \norm{y}(r_k-r_j)\rfloor+\lfloor \norm{y}(r_i-r_k) \rfloor \\[6pt]
&>y_j+1+\lfloor \norm{y}(r_i-r_j)\rfloor,
\end{align*}
where the equality follows from applying \autoref{claim:1} to $k$ and $j$,  and the last inequality follows from  \autoref{claim:2} and the fact that for any $a\in \mathbb{R}$, $1+\lfloor a\rfloor>a\geq \lfloor a\rfloor$. The last inequality however contradicts \autoref{claim:1} applied to $i$ and $j$. Hence, $T^r_i(y)=T^r_k(y)$ for all $i, k\in\mathbf A$, and a symmetric argument establishes that $T^r_j(z)=T^r_k(z)$ for all $j, k\in\mathbf B$.
\end{proof}
We have thus far shown that there exist a constant $a\in \mathbb{R}$ and a bijective mapping $\hat\pi:\mathbf A\to \mathbf B$ such that 
\begin{enumerate}[label={(\alph*)}]
\item For all $i\notin \mathbf A\cup\mathbf B$, $T^r_i(y)=T^r_i(z)$, and 
\item For all $i\in\mathbf A$,  $T^r_i(y)=T^r_i(z)+1=T^r_{\hat\pi(i)}(z)=T^r_{\hat\pi(i)}(y)+1=a$.
\end{enumerate}
Consequently, $T^r(y)$ and $T^r(z)$ are permutations. Furthermore, since the two properties listed above must hold for any two distributions in $F_r(x)$, the constant $a$ is independent of $y$ and $z$. Finally, we define the mapping $\pi:\mathcal{N}\to \mathcal{N}$ given by 
\[
\pi(i)=\left\{\begin{array}{ccc}
i & \mbox{if} & i\notin\mathbf A\cup\mathbf B\\
\hat\pi(i) & \mbox{if} & i\in\mathbf A\\
\hat\pi^{-1}(i)& \mbox{if} & i\in\mathbf B
\end{array}
 \right. ,
\]
which gives us the statement of \autoref{lem:frontier-bias-permute}. It only remains to show that $\pi$ is an involution. By construction, for $i\notin \mathbf A\cup\mathbf B$, $\pi(\pi(i))=\pi(i)=i$.  For $i\in \mathbf A$, $\pi(i)=\hat\pi(i)\in\mathbf B$ and thus $\pi(\pi(i))=\hat\pi^{-1}(\hat \pi(i))=i$. Similarly,  for $i\in\mathbf B$, $\pi(i)=\hat\pi^{-1}(i)\in\mathbf A$ and thus $\pi(\pi(i))=\hat\pi(\hat \pi^{-1}(i))=i$.  This concludes the proof. 
\end{proof}\medskip

By \autoref{lem:frontier-bias-norm}, any type distribution in the ($r$-targeting) Schur frontier is achieved by admitting as many students as the capacity permits. Furthermore, by \autoref{lem:frontier-bias-permute}, any two distributions in a nonsingleton Schur frontier must be ``permutations" of each other, and if the two distributions differ in their $i^{th}$ coordinate, then they must do so by only one. 

Lemmas~\ref{lem:frontier-bias-norm} and~\ref{lem:frontier-bias-permute} provide substantial structure on the Schur frontier, and allow us to provide a simple characterization that relates majorization to the $\ell^2$-norm. This is striking because majorization---and hence, $r$-targeting diversity---is not a complete binary relation, whereas comparisons of feasible distributions based on their proximity in the Euclidean distance to a target distribution yields a total order over $\mathbb{Z}_+^n$. 

To that end, for any $x \in \mathbb Z_+^n$, define
\begin{align}
\label{eq:minimization}
M_r(x)\coloneqq \argmin_{y\in B(x)} \sum_{i=1}^n \big(y_i-\norm{y}r_i\big)^2 \hspace*{.5em} \text{ s.t. } \hspace*{.1em} \norm{y}=\min\{q, \norm{x}\}.
\end{align} 
Since the constraint set in \eqref{eq:minimization} is nonempty and finite,  $M_r(x)$ is nonempty for any $x\in \mathbb Z^n_+$. From the following proposition, we can then also establish that $F_r(x)$ is nonempty for any type distribution $x\in \mathbb Z^n_+$. 

\begin{lemma}\label{prop:frontier-bias-characterization}
For any $x \in \mathbb Z_+^n$, $F_r(x)= M_r(x)$.
\end{lemma}
\begin{proof}[Proof of \autoref{prop:frontier-bias-characterization}]
Fix a feasible distribution $x\in \mathbb{Z}_+^n$, and notice that we can equivalently define $M_r(x)$ as
\[
M_r(x)\coloneqq \argmin_{y\in B(x)} ||T^r(y)||_2 \hspace*{.5em} \text{ s.t. } \hspace*{.1em} \norm{y}=\min\{q, \norm{x}\}.
\]
Let us first show that  $M_r(x)\subseteq F_r(x)$. Fix some $y\in M_r(x)$. Take any $z\in B(x)$, which by definition of the budget set, satisfies $\norm{z}\leq \min\{q, \norm{x}\}=\norm{y}$. Thus, $z\mathrel{\slashed{>}} y$. 

Suppose $z \mathrel{\rhd_r} y$. By \autoref{prop:majorization}, we have $T^r(y) \succ_s T^r(z)$, which further implies that $\norm{z}=\norm{y}$. Thus, for any continuous and strictly convex function $g:\mathbb{R}\to\mathbb{R}$, we have:\footnote{See Lemma C.1.a of \cite{marshallolkin} for the connections between strictly convex functions, strict Schur-convexity, and strict majorization. } 
\[
\sum_{i=1}^n g(T^r_i(y))>\sum_{i=1}^n g(T^r_i(z)).
\]
However, taking $g(X)=X^2$, the above inequality implies $||T^r(z)||_2<||T^r(y)||_2$, which contradicts the fact that $y\in M_r(x)$. To avoid such a contradiction, we must have that $z\mathrel{\slashed \rhd_r} y$. In other words, for each $y\in M_r(x)$, there exists no distribution $z\in B(x)$ with either $z>y$ or $z \mathrel{\rhd_r} y$. Therefore, $M_r(x)\subseteq F_r(x)$. 

Let us now show the converse: $F_r(x)\subseteq M_r(x)$. We have already established that $M_r(x)$ is nonempty and that $M_r(x)\subseteq F_r(x)$. Pick any $z\in M_r(x)$ and any $y\in F_r(x)$. If $y\neq z$, then from \autoref{lem:frontier-bias-permute}, $T^r(y)$ is a permutation of $T^r(z)$, and hence $y\in M_r(x)$, which establishes $M_r(x)\supseteq F_r(x)$.
\end{proof}\medskip

\begin{lemma}\label{lem:frontier-ranking}
Let $x \in \mathbb Z_+^n$, and let $y\in B(x)$ such that $\norm{y}=\min\{q, \norm{x}\}$. If $y\notin F_r(x)$, then
\begin{enumerate}[label={$(\roman*)$}]
\item There exist $i, j\in \mathcal N$ such that  $y+\chi_i-\chi_j\in B(x)$ and  $y+\chi_i-\chi_j \mathrel{\rhd_r} y$, and 
\item For all $z\in F_r(x)$, $z \mathrel{\rhd_r} y$. 
\end{enumerate} 
\end{lemma}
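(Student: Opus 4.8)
The plan is to route everything through the sum-of-squares characterization $F_b(x)=M_b(x)$ of \autoref{prop:frontier-bias-characterization}, working on the feasible set $D\coloneqq\{w\in B(x):\norm{w}=\min\{q,\norm{x}\}\}$ with the objective $\Phi(w)\coloneqq\sum_i T_b(w)_i^2$. Two preliminary computations drive the argument. First, since $\norm{b}=0$, every $w\in D$ has $\norm{T_b(w)}=\norm{w}$, and a norm-preserving unit transfer satisfies $T_b(w+\chi_i-\chi_j)=T_b(w)+\chi_i-\chi_j$; hence $\Phi(w+\chi_i-\chi_j)-\Phi(w)=2\bigl(T_b(w)_i-T_b(w)_j+1\bigr)$. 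Second, the two-coordinate analysis of majorization shows $w+\chi_i-\chi_j\rhd_b w$ if and only if $T_b(w)_j-T_b(w)_i>1$, which is exactly the condition that the transfer strictly lowers $\Phi$. So a feasible \emph{$\Phi$-improving unit transfer} and a feasible \emph{diversity-improving unit transfer} are the same object, and I can reason entirely with $\Phi$.

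The crux is (i), which I will prove in its contrapositive form: if $y\in D$ admits no feasible $\Phi$-improving unit transfer, then $y\in M_b(x)=F_b(x)$. This is a local-to-global optimality statement for separable convex minimization over the integral box-simplex $D$. Suppose not, and among all minimizers of $\Phi$ on $D$ choose $z$ making $\sum_i\abs{z_i-y_i}$ minimal; since $y\notin M_b(x)$ we have $\Phi(z)<\Phi(y)$, so $z\neq y$ and, because $\norm{z}=\norm{y}$, the sets $A\coloneqq\{i:z_i>y_i\}$ and $B\coloneqq\{j:z_j<y_j\}$ are nonempty. For every $i\in A$ and $j\in B$ the transfer $y+\chi_i-\chi_j$ is feasible (raising $i$ keeps $y_i+1\le z_i\le x_i$, lowering $j$ keeps $y_j-1\ge z_j\ge 0$), so the no-improvement hypothesis gives $T_b(y)_i-T_b(y)_j+1\ge 0$. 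Here integrality does the real work: because $z_i-y_i\ge 1$ and $y_j-z_j\ge 1$, one gets $T_b(z)_i-T_b(z)_j\ge\bigl(T_b(y)_i-T_b(y)_j\bigr)+2\ge 1$, whence the reverse transfer $z-\chi_i+\chi_j$ is feasible and satisfies $\Phi(z-\chi_i+\chi_j)-\Phi(z)=2\bigl(T_b(z)_j-T_b(z)_i+1\bigr)\le 0$. This produces a minimizer strictly closer to $y$ (smaller $\sum_i\abs{z_i-y_i}$), contradicting the choice of $z$. Thus no such $y$ exists, and (i) follows: any $y\in D\setminus F_b(x)$ has a feasible transfer $y+\chi_i-\chi_j\in B(x)$ with $y+\chi_i-\chi_j\rhd_b y$.

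For (ii) I will iterate (i). Starting from $w^0=y$, while $w^k\notin F_b(x)$ part (i) supplies a feasible transfer to $w^{k+1}\rhd_b w^k$, i.e.\ $T_b(w^k)\succ_s T_b(w^{k+1})$, which strictly decreases $\Phi$; since $D$ is finite the process halts at some $w^K\in F_b(x)$, and transitivity of strict majorization yields $T_b(y)\succ_s T_b(w^K)$ (note $K\ge 1$, as $w^0=y\notin F_b(x)$). Finally, for an arbitrary $z\in F_b(x)$, \autoref{lem:frontier-bias-permute} shows $T_b(z)$ is a permutation of $T_b(w^K)$; as majorization depends only on the sorted entries, $T_b(y)\succ_s T_b(z)$, that is, $z\rhd_b y$.

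I expect the integrality step in (i) --- upgrading the local inequality $T_b(y)_i-T_b(y)_j\ge -1$ to $T_b(z)_i-T_b(z)_j\ge 1$ via $z-y\in\mathbb Z^n$ --- to be the main obstacle, since a purely convex-analytic (first-order) bound along the segment from $y$ to $z$ is too weak to rule out $\Phi(z)<\Phi(y)$; the ``closest minimizer'' device is what converts local optimality into global optimality. A secondary point to handle carefully is that the improvement condition requires a \emph{real} gap strictly greater than $1$ (a gap of exactly $1$ merely permutes two coordinates and gives no strict improvement), which is why the ``$+2$'' from integrality, rather than merely ``$+1$,'' is essential.
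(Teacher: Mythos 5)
Your proof is correct, and it reaches both parts by a route that differs materially from the paper's, although both rest on the same foundation: the identification $F_b(x)=M_b(x)$ from \autoref{prop:frontier-bias-characterization}, together with unit transfers between the sign sets of $y-z$ for a minimizer $z$. For part (i), the paper argues directly: choosing coordinates $i^*,i_*$ extremal for $T_b(y)$ among the over- and under-shooting sets, it expands $\sum_i T_b(y)_i^2-\sum_i T_b(z)_i^2>0$ and bounds this difference above by $2\bigl(T_b(y)_{i^*}-T_b(y)_{i_*}-1\bigr)\sum_{i}\bigl(T_b(z)_i-T_b(y)_i\bigr)$, forcing the gap $T_b(y)_{i^*}-T_b(y)_{i_*}>1$ and hence exhibiting the improving transfer explicitly. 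You instead prove the contrapositive as a local-to-global optimality principle via the closest-minimizer exchange argument; your integrality step (the ``$+2$'' bump) does the work that the paper's quadratic estimate does. Your version is longer but makes transparent that the statement is an instance of ``local optimality implies global optimality'' for separable convex minimization over this integral feasible set, and it would survive replacing $t\mapsto t^2$ by any strictly convex function; the paper's version is shorter and constructive. For part (ii), the paper rules out incomparability of $y$ with $z\in F_b(x)$ by contradiction, iterating part (i) while maintaining both incomparability and non-membership in the frontier to build an impossible infinite sequence; you instead run the improving chain until it lands on the frontier at some $w^K$, obtain $w^K\rhd_b y$ by transitivity of strict majorization, and transfer the conclusion to an arbitrary $z\in F_b(x)$ via the permutation equivalence of \autoref{lem:frontier-bias-permute}. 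This is cleaner, with one small caveat you should state explicitly: \autoref{lem:frontier-bias-permute} is stated only for $\abs{F_b(x)}>1$, so in the singleton case you should note that $z=w^K$ and the conclusion is immediate.
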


\begin{proof}[Proof of \autoref{lem:frontier-ranking}] Fix a distribution $x\in\mathbb{Z}_+^n$ such that $F_r(x)\neq B(x)$. Pick any $y\in B(x)\backslash F_r(x)$ such that $\norm{y}=\min\{q, \norm{x}\}$.\medskip

\noindent \textbf{(Proof of point $(i)$):} Consider any $y\in B(x)$ such that $\norm{y}=\min\{q, \norm{x}\}$ and $y\notin F_r(x)$. Thus, $y$ satisfies the constraints of the minimization problem \eqref{eq:minimization} but,  by \autoref{prop:frontier-bias-characterization}, $y\notin M_r(x)$. Therefore, there exists a distribution $z\in M_r(x)\subseteq B(x)$ with $||T^r(z)||_2<||T^r(y)||_2$.

Let us define the sets $\mathbf A\coloneqq \{i: y_i>z_i\}$ and $\mathbf B\coloneqq \{i: y_i<z_i\}$. Since $\norm{z}=\norm{y}$ and $z\neq y$, we know $\mathbf A$ and $\mathbf B$ are nonempty sets. Let $y_{i^*}\coloneqq \max\{y_i: i\in\mathbf A\}$ and  $y_{i_*}\coloneqq \min\{y_i: i\in \mathbf B\}$. Notice that for all $i\in \mathbf A$, $T^r_i(z)+1\leq T^r_i(y)\leq T^r_{i^*}(y)$, and hence, 
\begin{align}
\label{eq:3}
T^r_i(y)+T^r_i(z)\leq 2 T^r_{i^*}(y)-1.
\end{align}
Similarly, for $i\in\mathbf B$, $T^r_i(z)\geq T^r_i(y)_i+1\geq T^r_{i_*}(y)+1$, and hence, 
\begin{align}
\label{eq:4}
T^r_i(y)+T^r_i(z)\geq 2 T^r_{i_*}(y)+1.
\end{align}

As a result, we can now express $||T^r(y)||_2-||T^r(z)||_2>0$ as 
\begin{align*}
0<&\sum_{i=1}^n (T^r_i(y))^2-(T^r_i(z))^2\\[6pt]
=&\sum_{i\in \mathbf A\cup\mathbf B} (T^r_i(y))^2-(T^r_i(z))^2\\[6pt]
=&\sum_{i\in\mathbf A}  \underbrace{\Big(T^r_i(y)-T^r_i(z)\Big)}_{>0} \Big(T^r_i(y)+T^r_i(z)\Big)+\sum_{i\in\mathbf B}  \underbrace{\Big(T^r_i(y)-T^r_i(z)\Big)}_{<0} \Big(T^r_i(y)+T^r_i(z)\Big)\\[6pt]
\leq & \Big(2 T^r_{i^*}(y)-1\Big)\sum_{i\in\mathbf A} \Big(T^r_i(y)-T^r_i(z)\Big)+\Big(2 T^r_{i_*}(y)+1\Big)\sum_{i\in\mathbf B} \Big(T^r_i(y)-T^r_i(z)\Big)\\[6pt]
=&2 \Big(T^r_{i^*}(y)- T^r_{i_*}(y)-1\Big)\sum_{i\in\mathbf A} \Big(T^r_i(y)-T^r_i(z)\Big),
\end{align*}
where the first equality follows from the fact that $T^r_i(y)=T^r_i(z)$ for all $i\notin\mathbf A\cup\mathbf B$, the second inequality follows from \eqref{eq:3} and \eqref{eq:4}, and the last equality follows from the fact that $\sum_{i\in\mathbf A} T^r_i(y)-T^r_i(z)=\sum_{i\in\mathbf B}T^r_i(z)-T^r_i(y)$ because $\norm{T^r(z)}=\norm{T^r(y)}$. 

The above display equation yields $T^r_{i^*}(y)- T^r_{i_*}(y)>1$, 
 which implies that $T^r(y) \succ_s T^r(y)+\chi_{i_*}-\chi_{i^*}$, or equivalently, $y+\chi_{i_*}-\chi_{i^*} \mathrel{\rhd_r} y$. Furthermore, $y+\chi_{i_*}-\chi_{i^*}\in B(x)$ because we are taking a seat away from a student of type $i^*$ (which is possible because $y_{i^*}>z_{i^*}\geq 0$ as $i^*\in\mathbf A$) and giving a seat to a student of type $i_*$  (which is possible because $y_{i_*}<z_{i_*}\leq x_{i_*}$ as $i_*\in\mathbf B$). This concludes the proof of \autoref{lem:frontier-ranking}-$(i)$. \medskip

\noindent \textbf{(Proof of point $(ii)$):} Consider any $z\in F_r(x)$, which implies $\norm{z}=\min\{q, \norm{x}\}=\norm{y}$. Since $z\in F_r(x)$, we have that $y\mathrel{\slashed\rhd_r} z$. Additionally, since $y\notin F_r(x)$, we have that $y$ and $z$ are not equally $r$-diverse, i.e., it is not the case that $y\unrhd_r z \unrhd_r y$. Thus, there are only two mutually exclusive possibilities remaining: Either $y$ and $z$ are not comparable in terms of $r$-targeting diversity, or $z \mathrel{\rhd_r} y$. We show that the first case is not possible, and hence, the only possibility is that $z \mathrel{\rhd_r} y$. 

For the sake of a contradiction, suppose $y$ and $z$ are not comparable in terms of $r$-targeting diversity, i.e., $T^r(y)$ and $T^r(z)$ cannot be ordered by majorization. Let $y_0\equiv y$. Since $y_0\notin F_r(x)$, we  know from  \autoref{lem:frontier-ranking}-$(i)$ that there exists $i_i, j_1\in \mathcal N$ such that $y_1\coloneqq y_0+\chi_{i_1}-\chi_{j_1}\in B(x)$ and $y_1 \mathrel{\rhd_r} y_0$. Notice that $z\mathrel{\slashed \unrhd_r} y_1$; Otherwise, $z \mathrel{\rhd_r} y_0$ by transitivity, which goes against the initial assumption that $y_0$ and $z$ are incomparable. Furthermore, $y_1\mathrel{\slashed\unrhd_r} z$;  Otherwise, since $z$ cannot be strictly less $r$-diverse, we would have that $T^r(y_1)$  and $T^r(z)$ are permutations and thus, $z \mathrel{\rhd_r} y_0$ by transitivity. However, this once again would go against the initial assumption that $y_0$ and $z$ are incomparable. Finally, $y_1\notin F_r(x)$; Otherwise, $T^r(y_1)$ and $T^r(z)$ would be permutations by  \autoref{lem:frontier-bias-permute}, which once again would go against the initial assumption that $y_0$ and $z$ are incomparable., Therefore, we conclude that there exists a distribution $y_1\in B(x)$ such that $y_1 \mathrel{\rhd_r} y_0$,  $y_1$ and $z$ are incomparable, and $y_1\notin F_r(x)$. 

Now suppose we have established that for some $k>1$ that there exists a distribution $y_k\in B(x)$ such that $y_{k} \mathrel{\rhd_r} y_{k-1}$,  $y_k$ and $z$ are incomparable, and $y_k\notin F_r(x)$. The above argument establishes that there exists a distribution $y_{k+1}\in B(x)$ such that  $y_{k+1} \mathrel{\rhd_r} y_{k}$,  $y_{k+1}$ and $z$ are incomparable, and $y_{k+1}\notin F_r(x)$. Hence, there would exist an infinite sequence of distinct distributions $\{y_k\}_{k=0}^\infty$ with each $y_k\in B(x)$ and $y_{k+1} \mathrel{\rhd_r} y_k$. However, it is not possible to have such an infinite sequence of distinct distributions that are ranked by $r$-targeting diversity when there are only a finite number of feasible distributions in $B(x)$. Thus, to avoid this contradiction, we must have $z \mathrel{\rhd_r} y$. 
\end{proof}\medskip

\begin{lemma}\label{lem:comparativestat}
    Let $x\in\mathbb{Z}^n_+$ such that $\norm{x}\geq q$. Fix a type $i\in\mathcal{N}$.
    \begin{enumerate}[label=$(\roman*)$]
        \item If $F_r(x)\cap F_r(x+\chi_i)\neq \emptyset$, then $F_r(x)\subseteq F_r(x+\chi_i)$. 
        \item If $F_r(x)\cap F_r(x+\chi_i)=\emptyset$, then for each $y\in F_r(x)$ and  each $z\in F_r(x+\chi_i)$:
        \begin{enumerate}[label=$\alph*.$]
        \item $y_i=x_i$ and $z_i=x_i+1$, and 
        \item For each  $j\in \mathcal{N}$ with $y_j>z_j$, we have that $y+\chi_i-\chi_j\in F_r(x+\chi_i)$ and $z-\chi_i+\chi_j\in F_r(x)$.
        \end{enumerate}
    \end{enumerate}
    \end{lemma}
\begin{proof}[Proof of \autoref{lem:comparativestat}]
 Fix a distribution $x\in\mathbb{Z}^n_+$ with $\norm{x}\geq q$ and a type $i^*\in\mathcal{N}$.\medskip

\noindent\textbf{Case 1:} $F_r(x)\cap F_r(x+\chi_{i^*})\neq \emptyset$.

Consider a distribution $z\in F_r(x)\cap F_r(x+\chi_{i^*})$ and suppose, for the sake of contradiction, that there exists a distribution $y\in F_r(x)\backslash F_r(x+\chi_{i^*})$. By definition, $y\in B(x)\subseteq B(x+\chi_{i^*})$ and $\norm{y}=q=\min\{q,\norm{x+\chi_{i^*}}\}$. Since $z\in F_r(x+\chi_{i^*})$ and $y\notin F_r(x+\chi_{i^*})$, \autoref{lem:frontier-ranking}-$(ii)$ implies that $z \mathrel{\rhd_r} y$. However, because $z\in F_r(x)\subseteq B(x)$, we cannot have both $z \mathrel{\rhd_r} y$ and $y\in F_r(x)$. Thus, to avoid such a contradiction, we must have $ F_r(x)\subseteq F_r(x+\chi_{i^*})$.\bigskip

 \noindent\textbf{Case 2:} $F_r(x)\cap F_r(x+\chi_{i^*})= \emptyset$.

Consider a distribution $y\in F_r(x)$ and $z\in F_r(x+\chi_{i^*})$. Clearly, 
$\norm{y}=\norm{z}=q$. Since $y\in B(x)\subseteq B(x+\chi_{i^*})$, 
$\norm{y}=\min\{q, \norm{x+\chi_{i^*}}\}$, and $y\notin F_r(x+\chi_{i^*})$, 
\autoref{lem:frontier-ranking}-$(ii)$ implies that 
$z \mathrel{\rhd_r} y$. Since $y\in F_r(x)$, this is only possible if $z\notin B(x)$, 
which in turn implies that $z_{i^*}=x_{i^*}+1$.

Moreover, from \autoref{lem:frontier-ranking}-$(i)$, $y\notin F_r(x+ \chi_{i^*})$ 
implies that there exist types $k,l\in \mathcal{N}$ such that $y+\chi_k-\chi_l\in B(x+\chi_{i^*})$ and $y+\chi_k-\chi_l \mathrel{\rhd_r} y$. However, if $y_{i^*}<x_{i^*}$, then it is also the case that $y+\chi_k-\chi_l\in B(x)$, which would then contradict the fact that $y\in F_r(x)$. Hence, to avoid such a contradiction, we must have $y_{i^*}=x_{i^*}$. 

Let $\mathbf{A}\coloneqq \{i\in \mathcal{N}: y_i<z_i\}$ and $\mathbf{B}\coloneqq \{i\in \mathcal{N}: y_i>z_i\}$. We have already established that $i^*\in\mathbf{A}$. Since $\norm{y}=\norm{z}$, the set $\mathbf{B}$ must be nonempty. Pick any $j^*\in \mathbf{B}$. If $z=y+\chi_{i^*}-\chi_{j^*}$, then we trivially have our desired result.

Instead, consider the case with $z\neq y+\chi_{i^*}-\chi_{j^*}$. In this case, there exist finite sequences $\{i_m\}_{m=1}^M$ and $\{j_m\}_{m=1}^M$ with $i_m\in \mathbf{A}$ and $j_m\in\mathbf{B}$ for all $m\in\{1,\ldots, M\}$ such that 
\[
z=y+\chi_{i^*}-\chi_{j^*}+\underbrace{\sum_{m=1}^M \Big(\chi_{i_m}-\chi_{j_m}\Big)}_{\coloneqq w}.
\]
Notice that  $w\neq 0$ but $\norm{w}=0$. By construction, $i^*\notin \mathbf{B}$ and thus $j_m\neq i^*$ for all $m\in \{1,\ldots, M\}$. Also, notice that because $z_{i^*}=y_{i^*}+1$, we must have $i_m\neq i^*$ for all $m\in \{1,\ldots, M\}$. Thus, $y+w\in B(x)$ and $\norm{y+w}=\norm{y}=q$. By \autoref{prop:frontier-bias-characterization},   $y\in M_r(x)$. Consequently, $||T^r(y)||_2\leq ||T^r(y+w)||_2$, or equivalently
\begin{align*}
    \sum_{i\in\mathcal N} T^r_i(y)^2&\leq  \sum_{i\in\mathcal N} T^r_i(y+w)^2=\sum_{i\in\mathcal N} T^r_i(y)^2+T^r_i(w)^2+2T^r_i(y)T^r_i(w),
\end{align*}
where the equality follows from the linearity of $T^r$. As a result, we have 
\begin{align}
\label{eq:inequality1}
    \sum_{i\in\mathcal N} T^r_i(w)^2+2T^r_i(y)T^r_i(w)\geq 0.
\end{align}

Similarly, by \autoref{prop:frontier-bias-characterization}, $z\in M_r(x+\chi_{i^*})$ and $y+\chi_{i^*}-\chi_{j^*}\in B(x+\chi_{i^*})$ with $\norm{z}=\norm{y+\chi_{i^*}-\chi_{j^*}}=q$. Thus, $||T^r(y+\chi_{i^*}-\chi_{j^*})||_2\geq ||T^r(z)||_2= ||T^r(y+\chi_{i^*}-\chi_{j^*}+w)||_2$, or equivalently
\begin{align*}
    \sum_{i\in\mathcal N} T^r_i(y+\chi_{i^*}-\chi_{j^*})^2
    \geq \sum_{i\in\mathcal N} T^r_i(y+\chi_{i^*}-\chi_{j^*})^2+T^r_i(w)^2+2T^r_i(w)T^r_i(y+\chi_{i^*}-\chi_{j^*}).
\end{align*}
Consequently, we have 
\begin{align}
\label{eq:inequality2}
\sum_{i\in\mathcal N} T^r_i(w)^2+2T^r_i(w)T^r_i(y)+2T^r_i(w)T^r_i(\chi_{i^*}-\chi_{j^*})\leq 0.
\end{align}
The inequalities \eqref{eq:inequality1} and \eqref{eq:inequality2} together imply that 
\begin{align}
\label{eq:inequality3}
\sum_{i\in\mathcal N} T^r_i(w)T^r_i(\chi_{i^*}-\chi_{j^*})= \sum_{i\in\mathcal N} \sum_{m=1}^M T^r_i(\chi_{i_m}-\chi_{j_m}) T^r_i(\chi_{i^*}-\chi_{j^*})\leq 0.
\end{align}
However, notice that for each $i\in\mathcal{N}$ and each $m\in\{1,\ldots, M\}$
\[
T^r_i(\chi_{i_m}-\chi_{j_m}) T^r_i(\chi_{i^*}-\chi_{j^*})=\left\{\begin{array}{ccc}
1&\mbox{if}  &i=i_m=i^* \mbox{ or } i=j_m=j^* \\
0     & &\mbox{otherwise} 
\end{array}\right..
\]
Thus, \eqref{eq:inequality3} must hold with equality, i.e., the left-hand-side must equal zero.\footnote{Notice that this is true if and only if $i_m\neq i^*$ (which recall is already the case) and $j_m\neq j^*$ for all $m\in\{1,\ldots, M\}$.} This in turn implies that both \eqref{eq:inequality1} and \eqref{eq:inequality2} must also hold with equality. In particular, the left-hand side of \eqref{eq:inequality1} is zero, which is equivalent to 
\[
||T^r(y)||_2=||T^r(y+w)||_2.
\]
Since $y\in M_r(x)$ and $\norm{y}=\norm{y+w}=q$, we then have that $y+w\in M_r(x)$ as well. By \autoref{prop:frontier-bias-characterization}, we can conclude that $z-\chi_{i^*}+\chi_{j^*}=y+w\in F_r(x)$, as desired.

Similarly, the left-hand side of \eqref{eq:inequality2} is zero, which is equivalent to 
\[
||T^r(z)||_2=||T^r(y+\chi_{i^*}-\chi_{j^*})||_2.
\]
Since $z\in M_r(x+\chi_{i^*})$ and $\norm{z}=\norm{y+\chi_{i^*}-\chi_{j^*}}=q$, we then have that $y+\chi_{i^*}-\chi_{j^*}\in M_r(x+\chi_{i^*})$ as well. By \autoref{prop:frontier-bias-characterization}, we can conclude that $y+\chi_{i^*}-\chi_{j^*}\in F_r(x+\chi_{i^*})$, as desired.
\end{proof}\medskip

\begin{lemma}\label{lem:counterfactual}
Let $R, R'\subseteq S\subseteq \S$ with $R\neq R'$. Suppose $\xi(R)$ and $\xi(R')$ are in $F_r(\xi(S))$. Then for each student $s\in R\setminus R'$, there exists a student  $s'\in R'\setminus R$ such that $\xi((R\setminus \{s\})\cup \{s'\})\in F_r(\xi(S))$ and $\xi((R'\setminus \{s'\})\cup \{s\})\in F_r(\xi(S))$.
\end{lemma}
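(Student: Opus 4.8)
The plan is to recognize this lemma as the matroid base-exchange condition (the ``(B2)'' property) for the family $\{R'\subseteq S : \xi(R')\in F_b(\xi(S))\}$, and to prove it by pushing the student-level swap down to a single coordinate swap on type distributions, where the rigid structure of the frontier from \autoref{lem:frontier-bias-norm}, \autoref{lem:frontier-bias-permute}, and \autoref{prop:frontier-bias-characterization} can be brought to bear. Write $y\coloneqq \xi(R)$ and $z\coloneqq \xi(R')$; both lie in $F_b(\xi(S))$, so by \autoref{lem:frontier-bias-norm} they share the same norm, whence $T_b(y)_i-T_b(z)_i=y_i-z_i$ for every $i$. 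Throughout I would use \autoref{prop:frontier-bias-characterization} to replace the assertion ``$\in F_b(\xi(S))$'' by the pair of conditions ``lies in $B(\xi(S))$ with $\norm{\cdot}=\min\{q,\norm{\xi(S)}\}$'' and ``minimizes $\sum_i T_b(\cdot)_i^2$.'' This reduces every frontier-membership claim to an elementary feasibility check together with an invariance of the sum-of-squares objective.

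Fix $s\in R\setminus R'$ with type $t$, and split on how $y_t$ compares to $z_t$. If $y_t\le z_t$, observe that $R\cap R'$ holds at most $y_t-1$ type-$t$ students (it omits $s$), while $R'$ holds $z_t\ge y_t$ of them, so $R'\setminus R$ contains a type-$t$ student $s'$. Swapping $s$ for $s'$ leaves both type distributions unchanged, so $\xi\bigl((R\setminus\{s\})\cup\{s'\}\bigr)=y$ and $\xi\bigl((R'\setminus\{s'\})\cup\{s\}\bigr)=z$ are still in $F_b(\xi(S))$, and this case is finished.

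The substantive case is $y_t>z_t$. Here $\xi(R)\neq\xi(R')$, so $|F_b(\xi(S))|>1$ and I may invoke \autoref{lem:frontier-bias-permute}: since $y_t>z_t$, the index $t$ sits in $A\coloneqq\{i:y_i>z_i\}$, with $y_t=z_t+1$ and $T_b(y)_t=a$. Because $\norm{y}=\norm{z}$ and $y\neq z$, the set $B\coloneqq\{i:y_i<z_i\}$ is nonempty; I would pick any $t'\in B$, for which the lemma yields $z_{t'}=y_{t'}+1$ and $T_b(y)_{t'}=a-1$. As $z_{t'}>y_{t'}$, there is a type-$t'$ student $s'\in R'\setminus R$. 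Setting $y'\coloneqq y-\chi_t+\chi_{t'}=\xi\bigl((R\setminus\{s\})\cup\{s'\}\bigr)$ and $z'\coloneqq z+\chi_t-\chi_{t'}=\xi\bigl((R'\setminus\{s'\})\cup\{s\}\bigr)$, I claim both are in the frontier. Feasibility is immediate: the norm of $y'$ equals $\norm{y}$, its $t$-coordinate drops to $z_t\ge 0$, and its $t'$-coordinate rises to $z_{t'}\le\xi(S)_{t'}$, with the symmetric check for $z'$. For optimality, the swap sends $T_b(y)_t=a$ to $a-1$ and $T_b(y)_{t'}=a-1$ to $a$, so the multiset $\{T_b(y')_i\}$ coincides with $\{T_b(y)_i\}$ and $\sum_i T_b(y')_i^2=\sum_i T_b(y)_i^2$, the minimal value; by \autoref{prop:frontier-bias-characterization}, $y'\in F_b(\xi(S))$, and the mirror-image computation (using $T_b(z)_t=a-1$, $T_b(z)_{t'}=a$) gives $z'\in F_b(\xi(S))$.

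I expect the only real obstacle to be this second case, and specifically the requirement that \emph{both} swapped distributions land back on the frontier. The existence of a suitable $s'$ is routine; what makes the argument go through is the rigidity supplied by \autoref{lem:frontier-bias-permute}—that frontier points differ only by exchanging coordinate values between the two adjacent levels $a$ and $a-1$—since this is exactly what keeps the sum-of-squares objective invariant under the one-for-one coordinate swap and thereby collapses ``stays on the frontier'' into the feasibility bookkeeping above.
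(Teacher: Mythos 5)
Your proof is correct and follows essentially the same route as the paper's: the same case split (a trivial same-type swap when $y_t \leq z_t$, and otherwise a cross-type swap pairing the type of $s$ with a type $t'$ at which $R'$ has more students, exploiting the two-level structure $a$ and $a-1$ supplied by \autoref{lem:frontier-bias-permute}). The only divergence is how you certify that the swapped distributions remain on the frontier—you invoke the sum-of-squares characterization of \autoref{prop:frontier-bias-characterization}, whereas the paper verifies the involutory-permutation conditions of \autoref{lem:frontier-bias-permute} directly—and your route is, if anything, slightly tidier, since permutation-invariance of the objective requires only one comparison rather than a check against every other frontier point.
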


\begin{proof}[Proof of \autoref{lem:counterfactual}] Since both $\xi(R)$ and $\xi(R')$ are in $F_r(\xi(S))$, we have $\vert R\vert=\vert R'\vert$. Additionally, since $R\neq R'$,  the sets $R\setminus R'$ and $R'\setminus R$ are both nonempty. Pick any student $s\in R\setminus R'$ and let $i^*\in \mathcal N$ be the student's type. 

We first consider the case in which $\xi_{i^*}(R)=\xi_{i^*}(R')$. In other words, $R$ contains the same \emph{number} of type-$i^*$ students but not the same type-$i^*$ students as $R'$. In this case, there exists a student $s'\in R'\setminus R$ with type $i_{s'}$ such that $i_{s'}=i^*$. The statement of the lemma then follows trivially.

Next consider the case in which  $\xi_{i^*}(R)\neq \xi_{i^*}(R')$. Without loss of generality, suppose $\xi_{i^*}(R)> \xi_{i^*}(R')$. Since $\abs{R}=\abs{R'}$,  there exists a type  $i'\in \mathcal N$ with $i'\neq i^*$ such that $\xi_{i'}(R)<\xi_{i'}(R')$. Thus, there exists a student  $s'\in R'\setminus R$ with type $i_{s'}$ such that $i_{s'}=i'$. Since both $\xi(R)\in F_r(\xi(S))$ and  $\xi(R')\in F_r(\xi(S))$, by \autoref{lem:frontier-bias-permute},  $T^r_{i^*}(\xi(R))=T^r_{i^*}(\xi(R'))+1=T^r_{i'}(\xi(R'))=T^r_{i'}(\xi(R))+1$. 

Construct a new distribution $\xi((R\setminus \{s\})\cup\{s'\})=\xi(R)+\chi_{i'}-\chi_{i^*}$. Then notice that, by construction,  
\begin{align*}
T^r_k\big(\xi((R\setminus \{s\})\cup\{s'\})\big)=\left\{\begin{array}{ccc}
T^r_{i'}(\xi(R)) &\mbox{if} & k=i^*\\
T^r_{i^*}(\xi(R)) &\mbox{if} & k=i'\\
T^r_k(\xi(R)) &\mbox{if} & k\notin\{i^*,i'\}
\end{array}\right..
\end{align*}
In other words, $T^r\big(\xi((R\setminus \{s\})\cup\{s'\})\big)$ is a permutation of $T^r(\xi(R))$, and by \autoref{prop:majorization}, $\xi(R)\unrhd_r \xi((R\setminus \{s\})\cup\{s'\})\unrhd_r \xi(R)$. Moreover, $\xi((R\setminus \{s\})\cup\{s'\})\in B(x)$ because $\norm{\xi((R\setminus \{s\})\cup\{s'\})}=\norm{\xi(R)}\leq \min\{q, \norm{x}\}$, and because 
\[
0\leq \xi_{i^*}(R')= \xi_{i^*}((R\setminus \{s\})\cup\{s'\})< \xi_{i^*}(R)\leq x_{i^*}
\]
and 
\[0\leq \xi_{i'}(R)<\xi_{i'}((R\setminus \{s\})\cup\{s'\})= \xi_{i'}(R')\leq x_{i'}.
\]
We can therefore conclude that $\xi((R\setminus \{s\})\cup\{s'\})\in F_r(\xi(S))$. A symmetric argument shows that $\xi((R'\setminus \{s'\})\cup\{s\})\in F_r(\xi(S))$. 
\end{proof}\medskip

\subsection{\texorpdfstring{Proofs of \Cref{sec:majorization}}{}}
\noindent \begin{proof}[Proof of \autoref{prop:majorization}]
Consider two distributions $z, z'\in\mathbb{R}_+^n$ such that $z'$ is obtained from $z$ via a Robin Hood transfer with respect to $r$. Then there exist types $i,j\in\mathcal{N}$ with $z_i-\norm{z}{r}_i\geq z_j-\norm{z}{r}_j$, and a constant $0\leq \delta\leq z_i-z_j-\norm{z}({r}_i-{r}_j)$ such that 
\[
z'=z+\delta \chi_j-\delta\chi_i.
\]
By definition of $T^r$, we have:
\begin{enumerate}[label={$(\alph*)$}] 
    \item $\norm{z}=\norm{T^r(z)}$, 
    \item $z_i-\norm{z}{r}_i\geq z_j-\norm{z}{r}_j\Longleftrightarrow T^r_i(z)-\norm{T^r(z)}u_i\geq T^r_j(x)-\norm{T^r(z)}{u}_j$,
    \item $\delta\leq z_i-z_j-\norm{z}({r}_i-{r}_j)\Longleftrightarrow  \delta\leq T^r_i(z)-T^r_j(z)$, and
    \item $z'=z+\delta \chi_j-\delta\chi_i \Longleftrightarrow T^r(z')=T^r(z)+\delta \chi_j-\delta\chi_i$.
\end{enumerate}
Thus, $z'$ is obtained from $z$ via a Robin Hood transfer with respect to $r$ if and only if $T^r(z')$ is obtained from $T^r(z)$ via a Robin Hood transfer with respect to $u$. 

Consequently, for any two distributions $x,y\in\mathbb{R}_+^n$, $y$ is {more $r$-diverse} than  $x$ if and only $T^r(y)$ is more $u$-diverse than $T^r(x)$. Moreover, from \cite{hardy1952inequalities} (see for example, Theorem B.2 of \cite{marshallolkin}),  $T^r(y)$ is more $u$-diverse than $T^r(x)$ if and only if $T^r(x)$ majorizes $T^r(y)$, which yields the desired conclusion. 
\end{proof}

\subsection{\texorpdfstring{Proofs of \Cref{sec:choicerules}}{}}
\begin{proof}[Proof of \autoref{lem:frontier-condition}]\

\noindent (\emph{If} direction:) Let  $\xi(C(S))\in F_r(\xi(S))$ for all $S\in\S$. From \autoref{lem:frontier-bias-norm}, we have $\norm{\xi(C(S))}=\min\{q, \norm{\xi(S)}\}$. By noting that $\norm{\xi(C(S))}=\vert C(S)\vert$ and $\norm{\xi(S)}=\vert S\vert$, we can conclude that $C$ is a non-wasteful choice rule. Furthermore, by the definition of the $r$-targeting Schur frontier,  any $z\in B(\xi(S))$ satisfies $z\mathrel{\slashed\rhd_r }\xi(C(S))$. Hence, $C$ promotes $r$-targeting diversity. \medskip

 \noindent (\emph{Only if} direction:) Let $C$ be a non-wasteful choice rule that promotes $r$-targeting diversity. Since $C(S)\subseteq S$ and $|C(S)|\leq q$, we have $\xi(C(S))\in B(\xi(S))$. Furthermore, because $C$ is non-wasteful, $\norm{\xi(C(S))}=\min\{q, \norm{\xi(S}\}$. It remains to show that $\xi(C(S))\in F_r(\xi(S))$. For the sake of contradiction, suppose not! Then, by \autoref{lem:frontier-ranking}-$(i)$, there would exist $i, j\in \mathcal N$ such that  $\xi(C(S))+\chi_{i}-\chi_{j}\in B(\xi(S))$ and $\xi(C(S))+\chi_{i}-\chi_{j} \mathrel{\rhd_r} \xi(C(S))$. However, this would contradict the fact that $C$ promotes $r$-targeting diversity. Thus, to avoid the contradiction, we must have $\xi(C(S))\in F_r(\xi(S))$ as desired. 
\end{proof}\medskip

\begin{proof}[Proof of \autoref{thm:bias-choice-rule}]\

\noindent\textit{(If direction)}: Fix some priority ranking $P$ over $\mathcal{S}$. We show that $C^P_r$ satisfies the three axioms. First, by construction, $\xi(C^P_r(S))\in F_r(\xi(S))$ for all $S\subseteq \mathcal{S}$, which by \autoref{lem:frontier-condition}, implies that  $C^P_r$ is non-wasteful and promotes $r$-targeting diversity. 

To show that $C^P_r$ also satisfies $r$-targeting Schur-revealed preference, let $S\subseteq \mathcal{S}$, $s \in C^P_r(S)$, $s' \in S \setminus C^P_r(S)$, and 
\[
\xi((C^P_r(S))\mathrel{\slashed\rhd_r}\xi\Bigl( \bigl(C^P_r(S) \setminus \{s\} \bigr) \cup \{s'\} \Bigr).
\]
From \autoref{lem:frontier-ranking}-$(ii)$, we must have  $\xi\Bigl( \bigl(C^P_r(S) \setminus \{s\} \bigr) \cup \{s'\} \Bigr)\in F_r(\xi(S))$. We want to show that in this case, $s \mathrel{P} s'$. 

For the sake of contradiction, suppose $s' \mathrel{P} s$. Then $s'$ is considered before $s$ by $C^P_r$. Let $\overline S \subseteq C^P_r(S)$ be the subset of admitted students who are ranked higher than $s'$ and let $\underline S \subseteq C^P_r(S)$ be the subset of admitted students who are ranked lower than $s'$. Thus,  $(C^P_r(S) \setminus \{s\} \bigr) \cup \{s'\}=\overline S\cup(\underline S \backslash \{s\})\cup \{s'\}$. However, notice that 
\[
\xi(\overline S \cup\{s'\})\leq \xi(\overline S \cup(\underline S \backslash \{s\})\cup \{s'\})\in F_r(\xi(S)),
\]
which implies that when $C^P_r$ considers $s'$, there is a distribution in the Schur frontier that justifies admitting $s'$. This of course contradicts the initial assumption that $s' \in S \setminus C^P_r(S)$. To avoid this contradiction, it must be that $s \mathrel{P} s'$. 

Consequently, there cannot exist sequences $\{s_k\}_{k=1}^K$ and $\{S_k\}_{k=1}^K$ with $s_{K+1}=s_1$ and, for every $k\in \{1,\ldots,K\}$,
\begin{center}
$s_k\in C^P_r(S_k)$, $s_{k+1} \in S_k\setminus C^P_r(S_k)$, and $\xi(C^P_r(S))\mathrel{\slashed \rhd_r}\xi\Bigl( \bigl(C^P_r(S_k) \setminus \{s_k\} \bigr) \cup \{s_{k+1}\} \Bigr)$
\end{center}
because this would generate a cycle 
\[
s_1 \mathrel{P} s_2 \mathrel{P}\ldots \mathrel{P} s_{K}\mathrel{P} s_1
\]
contradicting the fact that $P$ is a preference relation. 

\noindent\textit{(Only-if direction)}:  Suppose that $C$ is a choice rule that is non-wasteful, promotes $r$-targeting diversity, and satisfies $r$-targeting Schur-revealed preference. From \autoref{lem:frontier-condition}, we know that $\xi(C(S))\in F_r(\xi(S))$ for all $S\subseteq \S$. 

Define the following binary relation $P$: for any $s,s'\in \cals$, say $s \mathrel{P} s'$ whenever there exists $S \subseteq \cals$ such that $s\in C(S)$,  $s'\in S\setminus C(S)$, and $\xi(C(S))\mathrel{\slashed\rhd_r}\xi\bigl((C(S) \setminus \{s\}) \cup \{s'\}\bigr)$. The $r$-targeting Schur-revealed preference implies that $P$ does not have any cycles. Therefore, by Szpilrajn's lemma (see the version in \cite{chambers2016revealed}), it can be completed to a preference relation over $\mathcal S$. Let $P^*$ be such a completion, and let $C^{P^*}_r$ be the $r$-targeting Schur choice rule for priority ranking $P^*$. By construction, $\xi(C^{P^*}_r(S))\in F_r(\xi(S))$ for all $S\subseteq \S$. 

To prove the \emph{Only-if} direction, we show that $C(S)=C^{P^*}_r(S)$ for all $S\subseteq \mathcal{S}$. Suppose, to the contrary, that there exists a set $S\subseteq \mathcal S$ such that $C(S)\neq C^{P^*}_r(S)$.  By \autoref{lem:counterfactual}, there exist students $s\in C(S)\setminus C^{P^*}_r(S)$ and $s'\in C^{P^*}_r(S)\setminus C(S)$ such that $\xi((C(S)\setminus \{s\})\cup\{s'\})\in F_r(\xi(S)$. Consequently, $\xi(C(S))\mathrel{\slashed \rhd_r} \xi\bigl((C(S) \setminus \{s\}) \cup \{s'\}\bigr)$, implying that $s \mathrel{ P } s'$ by the definition of the priority ranking $P$. 

Since $P^*$ is a completion, we must also have $s \mathrel{P^*} s'$, which implies that $s$ is considered before $s'$ by the choice rule $C^{P^*}_r$. Furthermore, again by \autoref{lem:counterfactual}, we have that $\xi((C^{P^*}_r(S)\setminus \{s'\})\cup\{s\})\in F_r(\xi(S))$ and, therefore, $s$ should be chosen by $C^{p^*}_r$ when considered, i.e., $s\in C^{P^*}_r(S)$. However, this contradicts the fact that $s\in C(S)\setminus C^{P^*}_r(S)$. To avoid such a contradiction, we must have that $C(S)=C^{P^*}_r(S)$ for all $S\subseteq \S$, which concludes the proof. 
\end{proof}\medskip

\begin{proof}[Proof of Theorem \ref{thm:merit}]
Fix a priority ranking $P$ over $\mathcal{S}$. Let $C$ be a choice rule such that for some $S\subseteq \mathcal{S}$, $C(S)\neq C^{P}_r(S)$. Suppose $\abs{C(S)}\geq \abs{C^P_r(S)}$ and $\xi(C^P_r(S))\mathrel{\slashed\rhd_r}\xi(C(S))$. To prove the theorem, it suffices to show that in this case, $C^P_r(S)$ $P$-priority dominates $C(S)$. 

Consider $\mathcal B(S) = \{S' \subseteq S : \xi(S') \in F_r(\xi(S)) \}$. By construction,  $\xi(C^P_r(S)) \in F_r(\xi(S))$. Additionally, $\xi(C(S)) \in F_r(\xi(S))$ by \autoref{lem:frontier-ranking}-$(ii)$. Therefore, $C^P_r(S)$ and $C(S)$ are in  $\mathcal B(S)$.  By \autoref{lem:base}, $\mathcal B(S)$ is a set of bases of a matroid.  Furthermore, by  \autoref{prop:greedy}, $C^P_r=G^P_r$, where $G^P_r$ is the greedy choice rule defined on the matroid with the set of bases $\mathcal B(S)$. We then get the desired result from the following:
\begin{lemma}[\cite{gale1968}]\label{lem:Gale}
The greedy choice rule $G^P_r(S)$ $P$-priority dominates any independent subset of $S$. 
\end{lemma}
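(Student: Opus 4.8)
The plan is to run the classical greedy-optimality argument for matroids, which is precisely Gale's theorem. Write $C(S)=\{g_1,\ldots,g_m\}$ for the greedy output, indexed so that $g_1 \mathrel{P} \cdots \mathrel{P} g_m$, and let $I=\{a_1,\ldots,a_\ell\}$ be an arbitrary independent subset of $S$, likewise indexed so that $a_1 \mathrel{P} \cdots \mathrel{P} a_\ell$. I would first record two structural facts. Since the greedy rule halts only when no remaining element can be adjoined, $C(S)$ is a maximal independent set, hence a base; as all bases of a matroid share the same cardinality and every independent set has cardinality at most that of a base, this yields $m=|C(S)|\geq |I|=\ell$, the cardinality requirement of priority dominance. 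Second, because the greedy rule examines the elements of $S$ in order of priority, the chosen set at the instant it considers an element $s$ is exactly $\{g_i : g_i \mathrel{P} s\}$, the greedy picks of strictly higher priority than $s$.

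For the coordinatewise comparison I would argue by contradiction. Suppose priority dominance fails at some index $k\leq \ell$; since neither $g_k \mathrel{P} a_k$ nor $g_k=a_k$ holds, completeness gives $a_k \mathrel{P} g_k$. The sets $A:=\{g_1,\ldots,g_{k-1}\}$ and $B:=\{a_1,\ldots,a_k\}$ are both independent, as subsets of $C(S)$ and $I$ respectively, and $|A|=k-1<k=|B|$. The augmentation property of matroids, a standard consequence of the base axioms (B1)--(B2), then supplies an element $a_j\in B\setminus A$ for which $A\cup\{a_j\}$ is independent.

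To finish I would show that the greedy rule must then have selected $a_j$, contradicting $a_j\notin C(S)$. Since $j\leq k$, either $a_j \mathrel{P} a_k$ or $a_j=a_k$, and in either case $a_j \mathrel{P} g_k$. This forces $a_j\notin C(S)$: any $g_i$ with $i\geq k$ satisfies $g_k \mathrel{P} g_i$ or $g_k=g_i$, whence $a_j \mathrel{P} g_i$ and so $a_j\neq g_i$ by antisymmetry, while $a_j\notin A$ by the choice of $a_j$. Next, any greedy pick of higher priority than $a_j$ also has higher priority than $g_k$ (because $a_j \mathrel{P} g_k$) and therefore lies in $A$; hence the chosen set when the rule reached $a_j$ was a subset of $A$, and adjoining $a_j$ leaves it independent, since it sits inside the independent set $A\cup\{a_j\}$. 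The greedy acceptance test, which admits $a_j$ exactly when the current set together with $a_j$ extends to a base, would therefore have admitted $a_j$, i.e. $a_j\in C(S)$ --- the desired contradiction. The step I expect to require the most care is this last piece of bookkeeping: one must verify that the greedy's chosen set at the moment $a_j$ is examined is contained in $A$, which rests on the fact that any greedy pick preceding $a_j$ must also precede $g_k$, together with the equivalence, valid in matroids, between ``extends to a base'' and ``is independent'' on which the acceptance test relies.
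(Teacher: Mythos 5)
Your proof is correct, but note that the paper itself does not prove this lemma at all: it is stated as a quoted result with a citation to Gale (1968) and used as a black box inside the proof of Theorem~\ref{thm:merit}. What you have written is a reconstruction of the classical greedy-optimality theorem for matroids, and it goes through: the greedy output is a base (giving $|C(S)|\geq |I|$), and at the first index $k$ where coordinatewise dominance fails, the augmentation property applied to $A=\{g_1,\ldots,g_{k-1}\}$ and $B=\{a_1,\ldots,a_k\}$ yields an element $a_j$ with $a_j \mathrel{P} g_k$ strictly, which the greedy rule would have been forced to accept when it examined $a_j$ (its chosen set at that moment lies inside $A$, and $A\cup\{a_j\}$ is independent, hence extends to a base, which is exactly the acceptance test). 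The citation buys the paper brevity; your argument buys self-containedness, at the cost of two points that would need tightening to live inside the paper's framework. First, the paper defines matroids only through the base axioms (B1)--(B2), so the independence-augmentation property you invoke to produce $a_j$ is itself a result that would need to be derived from (B1)--(B2) or separately cited. Second, your appeal to ``antisymmetry'' to get $a_j \neq g_i$ for $i \geq k$ really rests on the strictness of $a_j \mathrel{P} g_k$, i.e., on having both $a_k \mathrel{P} g_k$ and $\neg\, (g_k \mathrel{P} a_k)$ at the failing index and propagating this through transitivity; this is routine but should be said explicitly, since the paper's $P$ is a complete, transitive, antisymmetric relation rather than an irreflexive one. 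Neither point is a genuine gap.
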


Thus, by Lemma \ref{lem:Gale}, $C^P_r(S)$ $P$-priority dominates any independent subset of $S$. Since $C(S)$ is independent (because $C(S)\in \mathcal{B}(S)$), we get that $C^P_r(S)$ $P$-priority dominates $C(S)$.
\end{proof}\medskip

\begin{proof}[Proof of \autoref{lem:base}]
For $S\subseteq \S$, let $\mathcal B(S) \coloneq \{S' \subseteq S : \xi(S') \in F_r(\xi(S)) \}$. 
By  \autoref{prop:frontier-bias-characterization}, $F_r(\xi(S))$ is nonempty. As a result, $\mathcal B(S)$ is nonempty, which means that 
$\mathcal B(S)$ satisfies matroid base axiom B1.

Let $S'\in \mathcal B(S)$, $S'' \in \mathcal B(S)$, and $s'\in S'\setminus S''$. By definition of $\mathcal B(S)$, $\xi(S'),\xi(S'') \in F_r(\xi(S))$. 
By Lemma \ref{lem:counterfactual}, there exists $s'' \in S'' \setminus S'$ such that 
$\xi((S'\setminus \{s'\}) \cup \{s''\}) \in F_r(\xi(S))$ and $\xi((S'' \setminus \{s''\}) \cup \{s'\})\in F_r(\xi(S))$. 
These imply that $(S'\setminus \{s'\}) \cup \{s''\} \in \mathcal B(S)$ and $(S'' \setminus \{s''\}) \cup \{s'\} \in \mathcal B(S)$. Therefore, $\mathcal B(S)$ satisfies matroid base axiom B2.
\end{proof}\medskip

\begin{proof}[Proof of \autoref{prop:greedy}]
Let $S\subseteq \cals$ be the input set. We show that $G^P_r$ and $C^P_r$ coincide at each step $i$ of the respective algorithms by showing that either the same student $s_i$ is added to the set of chosen students or $s_i$ is skipped. Since both algorithms start with the empty set, we get the result that the two choice rules produce the same outcome for input $S$.

At step $i$ of the $r$-targeting greedy choice rule, $s_i$ is added to $S_{i-1}$ if and only if $S_{i-1} \cup \{s_i\} \subseteq S'$ for some $S' \in \mathcal{B}(S)$. By definition,  $S' \in \mathcal{B}(S)$ if and only if $\xi(S')\in F_r(\xi(S))$. In other words,  $s_i$ is added to $S_{i-1}$  if and only if $\xi(S_{i-1} \cup \{s_i\}) \leq x$ for some $x \in F_r(\xi(S))$, which is step $i$ of the $r$-targeting Schur choice rule.
\end{proof}

\subsection{\texorpdfstring{Proofs of \Cref{sec:markets}}{}}

\begin{proof}[Proof of \autoref{thm:pi}]
We need the following properties of choice rules in our proofs. A choice rule $C$ satisfies \textbf{consistency}, if, for each $S \subseteq \mathcal{S}$ and $s \in \mathcal{S} \setminus S$, $s \notin C(S\cup \{s\}) \; \implies \; C(S)=C(S\cup \{s\})$ 
\citep{chayen17}.\footnote{In the context of matching with contracts, \cite{aygson12a} call this property \textit{the irrelevance of rejected contracts}.} A choice rule $C$ satisfies  \textbf{substitutability}, if, for each $S \subseteq \mathcal{S}$ and $s \in \mathcal{S} \setminus S$, $  C(S\cup \{s\}) \backslash \{s\}\subseteq C(S)$ \citep{kelso82,roth84}. 

\begin{lemma}[\cite{aizmal81}]\label{lem:pi}
A choice rule satisfies path independence if and only if it satisfies consistency and substitutability.
\end{lemma}

Using \autoref{lem:pi}, we prove that $C^P_r$ satisfies path independence by showing that it jointly satisfies consistency and substitutability. Fix $S \subseteq \cals$ and $s \in \cals \setminus S$. Let $\xi^1 \coloneqq \xi(C^P_r(S))$ and $\xi^2 \coloneqq \xi(C^P_r(S\cup\{s\}))$. 
\medskip

\noindent \emph{Proof of consistency:}

Suppose $s \notin C^P_r(S\cup \{s\})$. We show $C^P_r(S)=C^P_r(S \cup \{s\})$. Since  $s \notin C^P_r(S\cup \{s\})$, we get $\xi^2 \in B(\xi(S))$. As $B(\xi(S))\subseteq  B(\xi(S\cup\{s\}))$, and $\xi^2\in  F_r(\xi(S\cup\{s\}))$, there is no type distribution in $B(\xi(S))$ that either has strictly more students or is strictly more $r$-diverse than $\xi^2$. Hence, $\xi^2 \in F_r(\xi(S))$. This has two implications. 

First, $\xi^2 \in F_r(\xi(S))$ implies $C^P_r(S \cup \{s\}) \in \mathcal B(S)\coloneqq \{S' \subseteq S : \xi(S') \in F_r(\xi(S))\}$. By \autoref{prop:greedy} and \autoref{lem:Gale}, $C^P_r(S)$ $P$-priority dominates $C^P_r(S \cup \{s\})$. 

Second, $\xi^2 \in F_r(\xi(S))$ implies $F_r(\xi(S))\cap F_r(\xi(S\cup\{s\}))$ is nonempty. Since $C^P_r$ is non-wasteful and $s \notin C^P_r(S\cup \{s\})$, we have that  $\norm{\xi^1}=\norm{\xi^2}=q\leq \norm{\xi(S)}$. By \autoref{lem:comparativestat}-$(i)$, we have that $F_r(\xi(S))\subseteq F_r(\xi(S\cup\{s\}))$. Thus, $\xi^1\in F_r(\xi(S\cup\{s\}))$, which in turn implies that $C^P_r(S) \in \mathcal B(S\cup\{s\})$. Again, applying \autoref{prop:greedy} and \autoref{lem:Gale}, we have $C^P_r(S\cup\{s\})$ $P$-priority dominates $C^P_r(S)$. 

The results that $C^P_r(S)$ and $C^P_r(S \cup \{s\})$ $P$-priority dominate each other imply that $C^P_r(S)=C^P_r(S\cup\{s\})$. This follows because when two sets $P$-priority dominate each other, they must have the same number of students and, furthermore, because $P$ is strict, they must have the same set of students. Therefore, $C^P_r$ satisfies consistency. \medskip

\noindent \emph{Proof of substitutability:} 

If $s\notin C^P_r(S \cup \{s\})$, then, by consistency, we have $C^P_r(S)=C^P_r(S\cup\{s\})$, so substitutability holds trivially. Likewise, if $|S|\leq q$, then $C^P_r(S)=S$ by non-wastefulness. In this case, substitutability again holds trivially because $C^P_r(S \cup \{s\})\subseteq S\cup\{s\}$. 

For the rest of the proof, we assume that $s\in C^P_r(S \cup \{s\})$ and $\abs{S}>q$. In this case, $\norm{\xi^1}=\norm{\xi^2}=q$. Since $s\in C^P_r(S\cup\{s\})\backslash C^P_r(S)$, the set $C^P_r(S)\backslash C^P_r(S\cup\{s\})$ is nonempty. We proceed by showing that there exists a student $s'\in C^P_r(S)\backslash C^P_r(S\cup\{s\})$ such that 
\[
C^P_r(S)\backslash\{s'\}=C^P_r(S\cup\{s\})\backslash\{s\}.
\]
We consider two cases depending on whether $F_r(\xi(S))\cap F_r(\xi(S\cup\{s\}))$ is empty or not.\smallskip

\noindent \textbf{Case 1: }$F_r(\xi(S))\cap F_r(\xi(S\cup\{s\}))\neq \emptyset$. 

By \autoref{lem:comparativestat}-$(i)$, $F_r(\xi(S))\subseteq F_r(\xi(S\cup\{s\}))$. Thus, $\xi^1\in F_r(\xi(S\cup\{s\}))$, which in turn implies that $C^P_r(S)\in\mathcal{B}(S\cup\{s\})$. As $\mathcal{B}(S\cup\{s\})$ satisfies the matroid base axioms (\autoref{lem:base}), there exists a student $s'\in C^P_r(S)\backslash C^P_r(S\cup\{s\})$ such that $(C^P_r(S)\backslash\{s'\})\cup\{s\}\in \mathcal{B}(S\cup\{s\})$ and $(C^P_r(S\cup\{s\})\backslash\{s\})\cup\{s'\}\in \mathcal{B}(S\cup\{s\})$. 

Given  $(C^P_r(S)\backslash\{s'\})\cup\{s\}\in \mathcal{B}(S\cup\{s\})$, by \autoref{prop:greedy} and \autoref{lem:Gale}, we have $C^P_r(S\cup\{s\})$ $P$-priority dominates $(C^P_r(S)\backslash\{s'\})\cup\{s\}$, or equivalently, $C^P_r(S\cup\{s\})\backslash\{s\}$ $P$-priority dominates $C^P_r(S)\backslash\{s'\}$.

Next, since $(C^P_r(S\cup\{s\})\backslash\{s\})\cup\{s'\}\subseteq S$ and is a base in $\mathcal{B}(S\cup\{s\})$, it must also be a base in $\mathcal{B}(S)$. Thus, by \autoref{prop:greedy} and \autoref{lem:Gale}, $C^P_r(S)$ $P$-priority dominates $(C^P_r(S\cup\{s\})\backslash\{s\})\cup\{s'\}$, or equivalently, $C^P_r(S)\backslash\{s'\}$ $P$-priority dominates $C^P_r(S\cup\{s\})\backslash\{s\}$. 

The results that $C^P_r(S)\backslash\{s'\}$ and $C^P_r(S \cup \{s\})\backslash\{s\}$ $P$-priority dominate each other imply that $C^P_r(S)\backslash\{s'\}=C^P_r(S \cup \{s\})\backslash\{s\}$, and thus, $C^P_r(S \cup \{s\})\backslash\{s\}\subseteq C^P_r(S)$. \smallskip

\noindent \textbf{Case 2: }$F_r(\xi(S))\cap F_r(\xi(S\cup\{s\}))= \emptyset$. 

Let $i^*\in\mathcal{N}$ be the type of student $s$. By \autoref{lem:comparativestat}-$(ii.a)$, $\xi_{i^*}^1=\xi_{i^*}(S)$ and $\xi_{i^*}^2=\xi_{i^*}(S\cup\{s\})=\xi_{i^*}^1+1$. Since $\norm{\xi^1}=\norm{\xi^2}=q$, there exists a type $j^*\in\mathcal{N}\backslash\{i^*\}$ such that $\xi^1_{j^*}>\xi_{j^*}^2$. Consider any type $j^*$ student $s'\in C^P_r(S)\backslash C^P_r(S\cup\{s\})$. By \autoref{lem:comparativestat}-$(ii.b)$, $\xi^1+\chi_{i^*}-\chi_{j^*}\in F_r(\xi(S\cup\{s\}))$ and $\xi^2-\chi_{i^*}+\chi_{j^*}\in F_r(\xi(S))$. In other words, $(C^P_r(S)\backslash\{s'\})\cup\{s\}\in \mathcal{B}(S\cup\{s\})$ and $(C^P_r(S\cup\{s\})\backslash\{s\})\cup\{s'\}\in \mathcal{B}(S)$. An argument similar to the one in Case 1 then establishes that $C^P_r(S)\backslash\{s'\}$ and $C^P_r(S \cup \{s\})\backslash\{s\}$ $P$-priority dominate each other, implying that $C^P_r(S)\backslash\{s'\}=C^P_r(S \cup \{s\})\backslash\{s\}$. Thus, $C^P_r(S \cup \{s\})\backslash\{s\}\subseteq C^P_r(S)$. This concludes the proof that $C^P_r$ satisfies substitutability.
\end{proof}

\bibliographystyle{plainnat}
\nocite{}\bibliography{matching}
\end{document}